\documentclass[11pt]{article}

\usepackage[margin=1in]{geometry}
\usepackage{amsmath,amssymb,amsthm}
\usepackage{xcolor}
\usepackage{graphicx} 
\usepackage{float}
\usepackage{placeins}
\usepackage{subfig}
\usepackage{bm}
\usepackage{algorithm}
\usepackage{algpseudocode}
\usepackage[normalem]{ulem}

\newtheorem{Theorem}{Theorem}[section]
\newtheorem{Problem}{Problem}[section]

\newtheorem{Remark}{Remark}[section]

\newtheorem{Corollary}{Corollary}[section]

\title{Forward-Time Black--Scholes Reconstruction via Regularized Legendre Reduction}
\author{
Phuong M. Nguyen\thanks{Department of Mathematics and Statistics, University of North Carolina at Charlotte, Charlotte, NC, 28223, USA} \thanks{Corresponding author. Email: \texttt{pnguye45@charlotte.edu}}
\and
Matt Nguyen\footnotemark[1]
\and
Loc H. Nguyen\footnotemark[1]}

\date{}

\numberwithin{equation}{section}

\begin{document}

\maketitle

\begin{abstract}
We study a forward-time formulation of the Black--Scholes equation with
state-dependent volatility. In contrast to the classical terminal-value
pricing problem, where the option payoff is prescribed at maturity and
the price is computed backward in time, the present problem prescribes
the current option-price profile and seeks to recover the option--price profile
at the expiration date $T$. This formulation is ill-posed, since the equation evolves
in the unstable direction of the parabolic operator and high-frequency
perturbations in the initial data may be strongly amplified. To address
this difficulty, we introduce a price-dimensional reduction based on
shifted Legendre polynomials. The original Black--Scholes equation is
projected onto a finite-dimensional Legendre basis in the asset-price
variable, leading to a system of ordinary differential equations in time
for the expansion coefficients. This reduction acts as a spectral cutoff
and also relaxes the degeneracy caused by the factor $S^2$ at the
zero-price boundary. The main reconstruction method is a
dimension-reduced Legendre--Tikhonov method. We prove existence,
uniqueness, data stability, and convergence for each fixed truncation
level. We also include a reduced PINN solver as a secondary computational
comparison after the Legendre reduction. Numerical experiments with
smooth, butterfly-spread, and European put payoffs show that the
Legendre--Tikhonov method recovers terminal option--price profile from noisy
initial data, while the reduced PINN solver provides a useful additional
benchmark. 
Comparisons with the conventional physical-space
quasi-reversibility method demonstrate the stabilizing effect of the
Legendre reduction.
\end{abstract}

\noindent{\bf Keywords:}
Forward-time Black--Scholes equation;  ill-posed
 problem; Legendre polynomial expansion; dimensional
reduction; Tikhonov regularization; physics-informed neural networks;
 option-price reconstruction.

\noindent{\bf 2020 Mathematics Subject Classification:}
35R30,  65M32.

\section{Introduction}

Let $T>0$, and let $u=u(t,S)$ denote the option price, where
$t\in(0,T)$ is time and $S\in(0,\infty)$ is the underlying asset price.
Let $r>0$ be the risk-free interest rate, and let
$\sigma=\sigma(t,S)$ be a state-dependent volatility function. We consider
the Black--Scholes-type equation
\begin{equation}\label{eq:bs-local-vol-forward}
    u_t(t,S)
    + \frac{1}{2}\sigma^2(t,S)S^2 u_{SS}(t,S)
    + rS u_S(t,S)
    - r u(t,S)
    = 0,
    \qquad (t,S)\in (0,T)\times(0,\infty).
\end{equation}

In the classical Black--Scholes framework, the pricing equation is usually
posed as a terminal-value problem. Namely, the option payoff is prescribed
at the expiration time $T$, and the option price is computed for earlier
times. In this paper, we study a different formulation. We prescribe the option-price profile at $t=0$ and seek to compute the corresponding option price profile at maturity $T$ or interchangeably as the terminal option price profile in this paper. More precisely, we consider the
following problem.

\begin{Problem}\label{p}
Given the initial option-price profile
\[
    u(0,S)=u^0(S), \qquad S\in(0,\infty),
\]
determine the option-price profile at maturity $T$
\[
    \Phi(S)=u(T,S), \qquad S\in(0,\infty).
\]
\end{Problem}

Problem~\ref{p} is fundamentally different from the standard
Black--Scholes pricing problem. In the usual formulation, one prescribes
the terminal payoff $u(T,S)=\Phi(S)$ and computes $u(t,S)$ for $t<T$.
Although this is often described as solving the Black--Scholes equation
backward in calendar time, it becomes a well-posed forward parabolic
problem after introducing the time-to-maturity variable $\tau=T-t$. In
contrast, Problem~\ref{p} evolves the Black--Scholes equation in the
opposite, unstable parabolic direction. Therefore, small perturbations or
noise in the initial profile $u^0$ may be strongly amplified at later
times. This makes Problem~\ref{p} an ill-posed finite-time prediction
problem, and regularization is needed.

The Black--Scholes model is one of the foundational frameworks of modern
financial mathematics \cite{black1973pricing}. Its classical form assumes
constant volatility and gives a closed-form formula for European options.
However, market data often exhibit implied-volatility patterns that cannot
be captured by a constant volatility model. This motivated the local
volatility framework introduced in \cite{dupire1994pricing}, in
which the volatility depends on both time and the asset price. Such models
lead to Black--Scholes-type partial differential equations with a
state-dependent volatility function $\sigma(t,S)$.

Many numerical methods have been developed for option-pricing problems
based on Black-Scholes-type equations. Finite difference methods have
been widely used for computing option-price profiles
\cite{cen2011robust,jeong2013comparison,jeong2018finite,kim2020finite,lee2023accurate},
and finite element methods have also been studied
\cite{andalaft2011real,golbabai2013superconvergence, suliadaptive, kumar2025efficient}.
For high-dimensional or path-dependent derivatives, Monte Carlo methods
are often preferred because of their flexibility and scalability
\cite{boyle1977options,chen2023quantum,glasserman2004monte,jeong2019hybrid,longstaff2001valuing}.
More recently, physics-informed neural networks (PINNs), introduced in
\cite{raissi2019physics}, have been applied to option-pricing problems
\cite{kim2024physics,wang2023deep}. Additionally,
\cite{dhiman2023physics} uses PINNs for European call and American put
options, while \cite{bae2024option} studies PINN approximations of
European put option prices and option
sensitivities such as Delta and Gamma.

Forward-time option-price prediction has also been studied in connection
with ill-posed parabolic problems. The work \cite{klibanov2015profitable}
formulates option-price forecasting as an ill-posed forward-in-time
problem for the Black--Scholes equation and solves it by regularization.
The later work \cite{Klibanov_2022} develops a quasi-reversibility method
for the positive-time Black--Scholes equation and proves convergence using
a Carleman estimate. Related machine-learning approaches have also been
considered; for instance, neural networks and convolutional neural
networks were used in \cite{KlibanovKirill2022,cao2023application} for
option-price forecasting where they showed improved forecasting performance compared with the quasi-reversibility method. More broadly, quasi-reversibility methods are
classical regularization tools for ill-posed problems
\cite{bourgeois2010quasi,darde2015iterated,klibanov1991computational,li2020recovering,nguyen2019inverse}.
Inverse problems associated with Black--Scholes-type equations, especially
volatility reconstruction and parameter calibration, have also been
studied by Tikhonov-type methods and related regularization techniques
\cite{bouchouev1997inverse,bouchouev1999uniqueness,crepey2003calibration,egger2005tikhonov,hein2005some,kwak2022reconstructing}.

The main contribution of this paper is a price-dimensional regularized
reconstruction framework for Problem~\ref{p}. The contribution is mainly
computational and methodological: we aim to obtain stable future-time
reconstructions for an ill-posed Black--Scholes problem whose diffusion
coefficient degenerates at $S=0$. We first truncate the asset-price domain
to a finite interval and project the solution onto a shifted Legendre basis
in the price variable $S$. Retaining only the first $N+1$ modes converts
the original Black--Scholes partial differential equation into a
finite-dimensional system of ordinary differential equations in time for
the Legendre expansion coefficients. This reduction acts as a spectral
cutoff: it suppresses highly oscillatory components that are responsible
for the strongest instability of the forward-time problem. It also
alleviates the difficulty caused by the degeneracy of the Black--Scholes
coefficient $\frac12\sigma^2(t,S)S^2$ at $S=0$, since the principal
differential operator in the reduced system is the first derivative with
respect to time, while the effect of the degenerate spatial operator is
encoded in the coefficient matrix.

After deriving the reduced coefficient system, we focus on a
dimension-reduced Legendre--Tikhonov method. In this method, the
coefficient vector is obtained by minimizing a Tikhonov functional
containing the residual of the reduced ordinary differential system, the
projected initial data, and an $H^2$ regularization term. For this method,
we prove existence and uniqueness of the minimizer, stability with respect
to the projected initial data, and convergence as the noise level tends to
zero for each fixed truncation number $N$. Thus, the rigorous
regularization analysis in this paper is carried out for the
Legendre--Tikhonov reconstruction.

We also include a reduced PINN solver as a secondary computational
approach. The PINN is applied only after the Legendre reduction, so the
network depends on the scalar time variable and outputs the first $N+1$
Legendre coefficients. This neural-network component is not used in the
convergence analysis. It is included to compare a neural-network
parametrization with the theoretically justified Legendre--Tikhonov
reconstruction and to test whether such a parametrization can provide a
useful reduced solver in practice.

Dimensional reduction has also been used as an effective tool in inverse
and ill-posed problems. The main idea is to project the unknown function
or the governing equation onto a finite-dimensional space, thereby
reducing a high-dimensional problem to a more manageable system while
retaining the dominant features of the solution. A related direction was
developed in \cite{dang2025recovery}, where a Legendre-polynomial
exponential basis was introduced for inverse initial data problems. This
type of basis has been successfully applied to several inverse problems
for partial differential equations
\cite{le2026globally,le2025inverse,neupane2026inverse,nguyen2026carleman,van2026inverse}.
Motivated by these works, we use shifted Legendre polynomials in the
asset-price variable to reduce the forward-time Black--Scholes equation
to a finite-dimensional system of ordinary differential equations in time.

We validate the proposed framework through several numerical experiments.
The tests include a smooth compactly supported profile, a European
butterfly spread payoff, and a European put payoff. These examples cover
both smooth and nonsmooth profiles, compactly supported and non-compact
payoff structures, and financially standard option payoffs. The numerical
experiments are designed to test the method in challenging settings,
including moderate final times such as $T=1$, $T=1.5$, and $T = 3$, and substantial
noise in the observed initial profile, with one test involving $35\%$
noise.

The remainder of the paper is organized as follows. In
Section~\ref{sec2}, we highlight the main challenges of the problem,
including the forward-time ill-posedness and the degeneracy at the zero
asset-price boundary. In Section~\ref{sec3}, we introduce the
price-dimensional reduction based on shifted Legendre polynomials and
derive the reduced ordinary differential system. In Section~\ref{sec4},
we present the Legendre--Tikhonov method, prove its stability and
convergence, and describe a reduced PINN solver used as a secondary
computational comparison. In Section~\ref{sec5}, we describe the
numerical setup, data generation, parameter selection, numerical examples and show comparisons with the conventional physical-space quasi-reversibility method. Finally, Section~\ref{sec6} contains concluding remarks.

\section{The main challenges}\label{sec2}

The initial-value formulation in Problem \ref{p} involves two principal
difficulties. The first is the severe instability caused by evolving the
Black--Scholes equation in the unfavorable parabolic direction. The second
is the degeneracy of the Black--Scholes operator at the boundary
$S=0$.

\subsection{Ill-posedness of the forward-time formulation}\label{ill-posedness}

The main difficulty of Problem \ref{p} is its ill-posedness. Although
\eqref{eq:bs-local-vol-forward} has the form of the Black--Scholes pricing
equation, prescribing data at $t=0$ and computing the solution at a later
time $T$ reverses the natural parabolic direction of the equation. Indeed,
rewriting \eqref{eq:bs-local-vol-forward} gives
\[
    u_t
    =
    -\frac{1}{2}\sigma^2(t,S)S^2u_{SS}
    -rS u_S
    +ru.
\]
Thus the second-order term has the opposite sign from a forward parabolic
evolution. In other words, the initial-value formulation behaves like a
backward heat equation in the asset-price variable. As a consequence,
high-frequency components of the initial data may be strongly amplified
as time increases, and small perturbations in $u_0$ can lead to large
changes in the computed profile $u(T,\cdot)$. Therefore, the mapping
$u_0\mapsto u(T,\cdot)$ is not expected to be continuous in standard
norms, and the problem is ill-posed in the sense of Hadamard.

This instability can be seen clearly in the constant-volatility case.
Assume that $\sigma>0$ is constant and introduce the logarithmic change of
variables $x=\log S$ and $w(t,x)=u(t,e^x)$. Then the Black--Scholes
equation becomes
\[
    w_t
    =
    -\frac{1}{2}\sigma^2 w_{xx}
    -\left(r-\frac{1}{2}\sigma^2\right)w_x
    +rw.
\]
The important point is the negative sign in front of $w_{xx}$. This is
opposite to the usual heat equation, where diffusion smooths out
oscillations. Here, because the sign is reversed, oscillations are
amplified rather than damped.

To make this instability precise, consider a small oscillatory perturbation
in the initial data. Let $w$ be the exact solution and let $w^\varepsilon$
be the solution corresponding to the perturbed initial data
$w^\varepsilon(0,x)=w(0,x)+\varepsilon e^{i\xi x}$, where
$0<\varepsilon\ll1$ and $\xi\in\mathbb{R}$ is the spatial frequency.
Denote the error by $z(t,x)=w^\varepsilon(t,x)-w(t,x)$. Since the equation
is linear, $z$ satisfies
\[
    z_t
    =
    -\frac{1}{2}\sigma^2 z_{xx}
    -\left(r-\frac{1}{2}\sigma^2\right)z_x
    +rz,
    \qquad
    z(0,x)=\varepsilon e^{i\xi x}.
\]
Solving this constant-coefficient equation for the Fourier component
directly, we obtain
\[
    z(t,x)
    =
    \varepsilon
    \exp\left(
        \left[
            \frac{1}{2}\sigma^2\xi^2
            -i\left(r-\frac{1}{2}\sigma^2\right)\xi
            +r
        \right]t
    \right)e^{i\xi x}.
\]
Hence
\[
    |z(t,x)|
    =
    \varepsilon
    \exp\left(
        \left[
            \frac{1}{2}\sigma^2\xi^2+r
        \right]t
    \right).
\]
Thus a perturbation of size $\varepsilon$ at frequency $\xi$ is amplified
by a factor containing $\exp(\frac{1}{2}\sigma^2\xi^2t)$. As
$|\xi|\to\infty$, this factor grows extremely fast. Therefore,
arbitrarily small high-frequency noise in the initial profile can become
large at later times. This demonstrates the instability of the
forward-in-time initial-value formulation.

The above example shows that the ill-posedness is mainly caused by the
amplification of highly oscillatory components of the noise. Indeed, a
small perturbation with frequency $\xi$ is amplified by a factor containing
$\exp(\frac12\sigma^2\xi^2t)$, which becomes very large as $|\xi|$
increases. Therefore, the most unstable part of the data is its
high-frequency component. In this work, we address this difficulty by
projecting the solution in the price variable onto a finite-dimensional
space and retaining only finitely many modes. The Black--Scholes equation
is then approximated by a system of ordinary differential equations in
time for the expansion coefficients. This dimension reduction plays the
role of a spectral cutoff: it removes the highly oscillatory modes that
are responsible for the strongest instability while preserving the
dominant low-frequency structure of the solution.

\subsection{Degeneracy at the zero asset-price boundary}

Another difficulty comes from the degeneracy of the Black--Scholes
operator at the boundary $S=0$. The leading second-order coefficient in
\eqref{eq:bs-local-vol-forward} is
$
    \frac{1}{2}\sigma^2(t,S)S^2.
$
Even when the volatility is strictly positive, this coefficient vanishes
as $S\to0^+$. Hence the equation is not uniformly parabolic on a domain
touching $S=0$. This lack of uniform parabolicity creates additional
analytical and numerical difficulties. Many standard tools for parabolic
equations, including energy estimates, stability estimates, regularity
arguments, and Carleman estimates, are usually formulated under the
assumption that the leading coefficient is bounded away from zero. When
this condition fails, the behavior near the degenerate boundary must be
handled separately, either by using estimates adapted to degenerate
operators, by transforming the equation, or by restricting the analysis to
an interval away from $S=0$.

The works \cite{KlibanovKirill2022,Klibanov_2022} provide important
quasi-reversibility and Carleman-estimate approaches for positive-time
Black--Scholes problems. In those papers, the stock-price variable is
available on an interval
\[
    S\in(s_b,s_a), \qquad 0<s_b<s_a.
\]
On such an interval, the factor $S^2$ is bounded away from zero. Hence,
after an affine change of variables to $(0,1)$, the leading coefficient
in the transformed equation satisfies a uniform positivity condition.

The formulation considered here includes the degenerate endpoint
$S=0$. Thus the coefficient
\[
    \frac12 \sigma^2(t,S)S^2
\]
vanishes at the boundary, and the equation is not uniformly parabolic on
a domain touching $S=0$. This prevents the direct use of standard
Carleman estimates for uniformly parabolic equations. To handle this
difficulty, we apply a Legendre reduction in the asset-price variable.
The resulting reduced problem is a finite-dimensional system of ordinary
differential equations in time. In this system, the principal
differential operator is the first derivative with respect to time, while
the effect of the degenerate Black--Scholes spatial operator is encoded
in the coefficient matrices.

\section{Legendre dimension reduction}\label{sec3}

Although the model is naturally stated for $S\in(0,\infty)$, the
computations are carried out on a truncated asset-price interval
$(0,S_{\rm max})$. It is therefore natural to use basis functions that are
orthogonal on this finite interval. Let $P_n$ be the $n$th Legendre
polynomial on $(-1,1)$, defined by Rodrigues' formula
\[
    P_n(x)
    =
    \frac{1}{2^n n!}
    \frac{d^n}{dx^n}
    \left(x^2-1\right)^n,
    \qquad -1<x<1.
\]
The polynomials $\{P_n\}_{n=0}^\infty$ are orthogonal in $L^2(-1,1)$ and
satisfy
\[
    \int_{-1}^1 P_n(x)P_m(x)\,dx
    =
    \frac{2}{2n+1}\delta_{mn},
\]
where $\delta_{mn}$ is the Kronecker delta. We define the shifted and
normalized Legendre functions on $(0,S_{\rm max})$ by
\begin{equation}\label{eq:Legendre-functions}
    \ell_n(S)
    =
    \sqrt{\frac{2n+1}{S_{\rm max}}}\,
    P_n\left(\frac{2S}{S_{\rm max}}-1\right),
    \qquad 0<S<S_{\rm max}, \quad n=0,1,2,\dots .
\end{equation}
Then
\[
    \int_0^{S_{\rm max}}\ell_n(S)\ell_m(S)\,dS=\delta_{mn},
    \qquad m,n=0,1,2,\dots,
\]
and $\{\ell_n\}_{n=0}^\infty$ forms a complete orthonormal system in
$L^2(0,S_{\rm max})$.

The Fourier expansion of $u(t, S)$ with respect to the basis $\{\ell_n\}_{n \geq 0}$ is
\begin{equation}\label{3.2}
	u(t, S) = \sum_{n = 0}^\infty u_n(t)\ell_n(S)
	\qquad
	\mbox{where }
	\qquad u_n(t) = \int_0^{S_{\rm max}}u(t,S)\ell_n(S)\,dS 
\end{equation}

The following result is a standard consequence of the convergence of
Legendre projections in Sobolev spaces; see, for example,
\cite{CanutoQuarteroni82,ShenTangWang11}. The only minor differences here
are that we use shifted and normalized Legendre functions on
$(0,S_{\rm max})$ and allow the expansion coefficients to depend on time.
These changes follow directly from an affine change of variables and from
applying the projection result to $H^1(0,T)$-valued functions. For
convenience, we state the result in the notation of this paper.

\begin{Theorem}\label{thm31}
Assume that
\[
    u\in H^k((0,S_{\rm max});H^1(0,T)),
    \qquad k\geq 3.
\]
Let $u_n(t)$ be as in \eqref{3.2}.
Then
\begin{equation}\label{3.3}
    u_S(t,S)=\sum_{n=0}^\infty u_n(t)\ell_n'(S),
    \qquad
    u_{SS}(t,S)=\sum_{n=0}^\infty u_n(t)\ell_n''(S),
\end{equation}
with convergence in $L^2((0,S_{\rm max});H^1(0,T))$. In particular, the
identities hold almost everywhere in $(0,T)\times(0,S_{\rm max})$.
\end{Theorem}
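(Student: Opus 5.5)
We work in the Hilbert space $X=H^1(0,T)$ and regard $u$ as an $X$-valued function of $S\in(0,S_{\rm max})$. After the affine change of variables $x=2S/S_{\rm max}-1$, the statement becomes a statement about Legendre expansions on $(-1,1)$ of functions in $H^k((-1,1);X)$. Every scalar result in \cite{CanutoQuarteroni82,ShenTangWang11} transfers verbatim to $X$-valued functions. The reason is that the projection $\Pi_N v=\sum_{n\le N}(v,\ell_n)\ell_n$ acts only on the $S$ variable and commutes with any bounded linear functional on $X$. Alternatively, one can fix an orthonormal basis $\{e_j\}$ of $X$ and sum the scalar estimates over $j$, using Parseval in $X$.

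The key step is the simultaneous-approximation estimate for the Legendre $L^2$-projection in higher Sobolev norms. For $0\le m\le k$ and $m\ge 1$, this estimate reads
\[
\|v-\Pi_N v\|_{H^m(0,S_{\rm max};X)}\le C\,N^{2m-\frac12-k}\,\|v\|_{H^k(0,S_{\rm max};X)} .
\]
Applying it with $m=2$ gives a rate $N^{7/2-k}$, which tends to $0$ once $k\ge 4$. For $k=3$ I would instead use the $H^2$-orthogonal (or $H^1_0$-type) projection, which has optimal rate $N^{m-k}$. This forces us to compare it with the $L^2$ projection. A cleaner route for exactly $k\ge3$ is a density argument: polynomials in $S$ with coefficients in $X$ are dense in $H^k(0,S_{\rm max};X)$. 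Combining this density with the uniform boundedness of $\partial_S^2\Pi_N$ as an operator from $H^k$ to $L^2$ (taken from the cited references for $k\ge3$) gives convergence by the Banach–Steinhaus principle. That boundedness is where I expect the main obstacle to lie: the $L^2$ projection is not $H^2$-stable, so some loss of powers of $N$ must be absorbed by the $k\ge3$ regularity. I would rely on the cited estimates rather than reprove them.

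Granting this, $\Pi_N u\to u$ in $H^2((0,S_{\rm max});X)$. Hence $\partial_S\Pi_N u=\sum_{n\le N}u_n\ell_n'\to u_S$ and $\partial_S^2\Pi_N u=\sum_{n\le N}u_n\ell_n''\to u_{SS}$ in $L^2((0,S_{\rm max});X)$, which is \eqref{3.3}. Differentiating the finite sums term by term is trivial, and $u_n\in X$ because $u_n(t)=\int u(t,S)\ell_n(S)\,dS$ is a Bochner integral of an $X$-valued function.

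For the almost-everywhere statement, $L^2((0,S_{\rm max});H^1(0,T))$ embeds continuously into $L^2((0,T)\times(0,S_{\rm max}))$. Convergence in the latter space gives a subsequence of partial sums converging pointwise almost everywhere. Since the full sequence of partial sums converges in $L^2$ to $u_S$ and $u_{SS}$, the series identities hold almost everywhere, understood as $L^2$ limits.
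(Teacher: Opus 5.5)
Your argument for $k\ge 4$ is sound, and it matches what the paper relies on. The paper gives no proof of this theorem, only a pointer to the Canuto--Quarteroni and Shen--Tang--Wang projection estimates. Those estimates give exactly your rate $\|v-\Pi_Nv\|_{H^2}\le CN^{7/2-k}\|v\|_{H^k}$, and $N^{3/2-k}$ in $H^1$, so the $u_S$ identity already holds for $k\ge2$.

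\textbf{The gap.} The gap is the case $k=3$ of the $u_{SS}$ identity, and it cannot be closed. The series there is by definition $\partial_S^2\Pi_Nu$ with $\Pi_N$ the $L^2$-projection, so switching to an $H^2$-orthogonal projection changes the object being studied. The bound $\sup_N\|\partial_S^2\Pi_N\|_{H^3\to L^2}<\infty$ that your Banach--Steinhaus step takes from the references is false, because the exponent $7/2-k$ is sharp. Running Banach--Steinhaus in reverse, convergence for every $u\in H^3$ would force that uniform bound, so the claim itself fails at $k=3$.

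\textbf{Counterexample.} Work on $(-1,1)$; the affine map to $(0,S_{\rm max})$ only changes constants.
\begin{itemize}
\item Let $w_N$ be the polynomial with $w_N'''=P_N$ and $w_N=w_N'=w_N''=0$ at $-1$. It is obtained by iterating $\int_{-1}^xP_n=(P_{n+1}-P_{n-1})/(2n+1)$.
\item Then $\|w_N\|_{H^3}\approx\|P_N\|_{L^2}\approx N^{-1/2}$.
\item The projection error is $w_N-\Pi_Nw_N=a_{N+1}P_{N+1}+a_{N+3}P_{N+3}$, with $a_{N+1}\approx-3/(8N^3)$ and $a_{N+3}\approx 1/(8N^3)$.
\item Use $P_n''=\sum_{m\le n-2,\ n-m\ \text{even}}(m+\tfrac12)\bigl(n(n+1)-m(m+1)\bigr)P_m$. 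For $m<N$ with $m\equiv N+1 \pmod 2$, the $P_m$-coefficient of $(w_N-\Pi_Nw_N)''$ is about $-(m+\tfrac12)(N^2-m^2)/(4N^3)$.
\item Summing these modes gives $\|(w_N-\Pi_Nw_N)''\|_{L^2}^2\approx 1/192$, which does not decay. Hence $\|\partial^2(I-\Pi_N)\|_{H^3\to L^2}\gtrsim N^{1/2}$.
\end{itemize}

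Now sum blocks: $v=\sum_k w_{N_k}$ with $N_k=4^k$.
\begin{itemize}
\item $v\in H^3$, since $\|v'''\|_{L^2}^2=\sum_k\|P_{N_k}\|_{L^2}^2<\infty$.
\item At $N=N_k$, the later blocks change the $H^2$ error only by $O(N_{k+1}^{-3/2})$.
\item Therefore, after transplanting to $(0,S_{\rm max})$, the partial sums $\sum_{n\le N}(v,\ell_n)\ell_n''$ stay at $L^2$-distance bounded below from $v''$ along $N=N_k$.
\end{itemize}
Setting $u(t,S)=v(S)$ gives $u\in H^3((0,S_{\rm max});H^1(0,T))$ for which the second identity fails.

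\textbf{Correct statement.} The $u_{SS}$ part needs $k\ge4$ (or $k>7/2$). Under that hypothesis, your first route (the vector-valued transfer plus the $H^2$ projection estimate) is a complete proof. Your reading of the almost-everywhere clause as an identity of $L^2$ limits, with pointwise convergence only along a subsequence, is also correct.
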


Substituting \eqref{3.2} and \eqref{3.3} into \eqref{eq:bs-local-vol-forward} gives
\begin{equation}\label{eq:Legendre-residual}
    \sum_{n=0}^\infty u_n'(t)\ell_n(S)
    +\frac12 \sigma^2(t,S)S^2
    \sum_{n=0}^\infty u_n(t)\ell_n''(S)
    +rS\sum_{n=0}^\infty u_n(t)\ell_n'(S)
    -r\sum_{n=0}^\infty u_n(t)\ell_n(S)
    =0
\end{equation}
for $(t, S) \in (0, T) \times (0, S_{\rm max}).$
Fix $m\in \mathbb{N}$. Multiplying \eqref{eq:Legendre-residual} by
$\ell_m(S)$ and integrating over $(0,S_{\rm max})$ gives
\begin{multline}\label{eq:projected-system-weak}
\sum_{n=0}^\infty u_n'(t)\int_0^{S_{\rm max}} \ell_n(S)\ell_m(S)\,dS
+\frac12\sum_{n=0}^\infty u_n(t)
\int_0^{S_{\rm max}} \sigma^2(t,S)S^2\ell_n''(S)\ell_m(S)\,dS  \\
+r\sum_{n=0}^\infty u_n(t)
\int_0^{S_{\rm max}} S\ell_n'(S)\ell_m(S)\,dS
-r\sum_{n=0}^\infty u_n(t)
\int_0^{S_{\rm max}} \ell_n(S)\ell_m(S)\,dS
=0,
\end{multline}
for all $t \in (0, T).$
Since $\{\ell_n\}_{n=0}^\infty$ is orthonormal in $L^2(0,S_{\rm max})$, \eqref{eq:projected-system-weak} becomes
\begin{equation}\label{eq:projected-system}
    u_m'(t)
    +\frac12\sum_{n=0}^\infty A_{mn}(t)u_n(t)
    +r\sum_{n=0}^\infty B_{mn}u_n(t)
    -r u_m(t)
    =0,
    \qquad m \in \mathbb{N},
\end{equation}
where
\[
    A_{mn}(t)=
    \int_0^{S_{\rm max}} \sigma^2(t,S)S^2\ell_n''(S)\ell_m(S)\,dS,
    \qquad
    B_{mn}=
    \int_0^{S_{\rm max}} S\ell_n'(S)\ell_m(S)\,dS .
\]
Equivalently,
\begin{equation}\label{eq:projected-ode}
    u_m'(t)
    =
    -\frac12\sum_{n=0}^\infty A_{mn}(t)u_n(t)
    -r\sum_{n=0}^\infty B_{mn}u_n(t)
    +r u_m(t), 
    \quad m \in \mathbb{N}.
\end{equation}
In computation, we fix a cutoff number $N$ and approximate \eqref{eq:projected-ode} as
\begin{equation}\label{eq:projected-odeN}
    u_m'(t)
    =
    \sum_{n = 0}^N C_{mn}(t)  u_n(t)
    \quad m \in \{0, 1, \dots, N\}.
\end{equation}
where
\[
    C_{mn}(t)=-\frac12 A_{mn}(t)-rB_{mn}+r\delta_{mn}
\]
and $\delta_{mn}$ is the Kronecker delta. 
Equivalently, if
\[
    \mathbf u(t)=\big(u_0(t),u_1(t),\dots,u_N(t)\big)^\top,
\]
the finite-dimensional system takes the compact form
\begin{equation}\label{eq:projected-ode-vector}
    \mathbf u'(t)=C(t)\mathbf u(t).
\end{equation}
Together with the projected initial data
\begin{equation}\label{eq:projected-initial-data}
    u_m(0)=\int_0^{S_{\rm max}} u^0(S)\ell_m(S)\,dS,
    \qquad m=0,\dots,N,
\end{equation}
which gives the initial-value problem for the coefficient vector $\mathbf u(t)$.

\begin{Remark}[Consistency of the finite Legendre reduction]
The finite system \eqref{eq:projected-ode-vector} should be understood as
a truncation of the infinite projected Legendre system. Let $P_N$ denote
the $L^2(0,S_{\rm max})$-orthogonal projection onto
$\operatorname{span}\{\ell_0,\ldots,\ell_N\}$. For the exact solution
$u$, the first $N+1$ Legendre coefficients satisfy
\[
    {\bf u}_N'(t)=C(t){\bf u}_N(t)+{\bf E}_N(t),
    \qquad
    {\bf u}_N(t)=(u_0(t),\ldots,u_N(t))^\top ,
\]
where the tail term is
\[
    ({\bf E}_N(t))_m
    =
    \int_0^{S_{\rm max}}
    \mathcal L(t)(I-P_N)u(t,S)\ell_m(S)\,dS,
    \qquad m=0,\ldots,N,
\]
and
\[
    \mathcal L(t)w
    =
    -\frac12\sigma^2(t,S)S^2w_{SS}
    -rS w_S
    +rw .
\]
Thus \eqref{eq:projected-ode-vector} is obtained by neglecting the
Legendre tail ${\bf E}_N$. If $\mathcal L(t)$ is uniformly bounded from
$H^2(0,S_{\rm max})$ to $L^2(0,S_{\rm max})$, then
\[
    \|{\bf E}_N\|_{L^2(0,T;\mathbb R^{N+1})}
    \leq
    C
    \|(I-P_N)u\|_{L^2(0,T;H^2(0,S_{\rm max}))}.
\]
By Theorem~\ref{thm31}, the Legendre expansion and its first two
$S$-derivatives converge under the stated regularity assumptions. Hence
\[
    \|{\bf E}_N\|_{L^2(0,T;\mathbb R^{N+1})}\to0
    \quad
    \text{as } N\to\infty .
\]
Therefore, the finite reduced system is consistent with the projected
Black--Scholes dynamics as the truncation level increases. Since the
forward-time problem is ill-posed, this consistency does not justify
direct propagation of the reduced system; it must still be combined with
regularization.
\end{Remark}

The remaining task is to solve the finite-dimensional initial-value
problem \eqref{eq:projected-ode-vector}--\eqref{eq:projected-initial-data}
for the Fourier-mode vector $\mathbf u(t)$. Once $\mathbf u(t)$ is
computed, due to \eqref{3.2}, the approximate solution of the Black--Scholes equation is
recovered by
\begin{equation}\label{3.11}
    u^N(t,S)=\sum_{n=0}^N u_n(t)\ell_n(S).
\end{equation}
In particular, the desired option--price profile at maturity $T$ is approximated by
\begin{equation} \label{3.12}
    u(T,S)\approx u^N(T,S)
    =
    \sum_{n=0}^N u_n(T)\ell_n(S).
\end{equation}
In this work, the main reconstruction is obtained by applying Tikhonov
regularization to the reduced coefficient system. We also test a reduced
PINN solver as a secondary numerical approach. Both solvers are applied to
the reduced ordinary differential system rather than directly to the
original Black--Scholes equation in the price variable.

\begin{Remark}
Even though the Legendre reduction transforms the Black--Scholes equation
into a finite-dimensional system of ordinary differential equations, a
direct solution of this system is still not reliable for the present
forward-time problem. The reason is that the original problem is
ill-posed, and this instability is inherited by the reduced system.

Indeed, solving
\[
    \mathbf u'(t)=C(t)\mathbf u(t),
    \qquad
    \mathbf u(0)=\mathbf u^0,
\]
directly propagates the initial coefficient vector $\mathbf u^0$ forward
in time. In our computations, this direct propagation produced unstable
reconstructions. This indicates that small errors in the initial
coefficients can be strongly amplified as time increases. Equivalently,
the discrete forward propagator for the reduced system may contain large
amplification factors. For this reason, the main method in this paper solves the reduced system
using Legendre--Tikhonov regularization. The reduced PINN solver is used
only as an additional computational comparison.
\end{Remark}

\begin{Remark}
The method analyzed below applies Tikhonov regularization to the
dimension-reduced Legendre coefficient system and is called the
dimension-reduced Legendre--Tikhonov method. We also report results from a
reduced PINN solver for comparison. The corresponding computational
procedures are summarized in Algorithm~\ref{alg:tikhonov-continuous} and
Algorithm~\ref{alg:pinn-reduced-system}, respectively.
\end{Remark}

\begin{Remark}
The choice of the shifted Legendre basis is motivated by several reasons.
First, after truncating the asset-price domain to $(0,S_{\rm max})$, the
problem is posed on a finite interval, and Legendre polynomials provide a
natural complete orthogonal basis in $L^2(0,S_{\rm max})$ after a simple
affine change of variables. Second, unlike Fourier bases, Legendre
polynomials do not impose periodicity in the price variable, which would
be artificial for option-price profiles. Third, the orthonormality of the
basis leads to a simple projection formula for the expansion coefficients
and allows the Black--Scholes equation to be converted into a finite
system of ordinary differential equations in time. Finally, truncating the
Legendre expansion retains only the low-order modes, which act as a
spectral cutoff and help suppress the highly oscillatory components that
are responsible for the strongest instability of the forward-time problem.
\end{Remark}

\section{Legendre--Tikhonov regularization and reduced PINN comparison}\label{sec4}

We now describe the main Legendre--Tikhonov solver for the reduced
coefficient system \eqref{eq:projected-ode-vector}. We also describe a
reduced PINN solver that will be used as a secondary numerical comparison.

For a fixed truncation number $N$, we
assume that the measured initial profile is a noisy version of the exact
initial profile $u^0$. More precisely, we write $u_\delta^0$ for the
noisy data and assume, in the analysis, that
\[
    \|u_\delta^0-u^0\|_{L^2(0,S_{\rm max})}\leq \delta .
\]
The corresponding projected noisy initial vector is defined by
\[
{\bf u}^0_\delta =
\begin{bmatrix}
   \displaystyle \int_0^{S_{\rm max}} u_\delta^0(S)\ell_m(S)\,dS
\end{bmatrix}_{m=0}^N .
\]
When the exact data $u^0$ are used, the corresponding projected vector is
denoted by
\[
{\bf u}^0 =
\begin{bmatrix}
   \displaystyle \int_0^{S_{\rm max}} u^0(S)\ell_m(S)\,dS
\end{bmatrix}_{m=0}^N .
\]
Since $\{\ell_m\}_{m=0}^N$ is orthonormal in $L^2(0,S_{\rm max})$, the
projected data satisfy
\[
    |{\bf u}^0_\delta-{\bf u}^0|
    \leq
    \|u_\delta^0-u^0\|_{L^2(0,S_{\rm max})}
    \leq \delta .
\]
In the numerical experiments, $u_\delta^0$ is generated by the
multiplicative noise model described in Section~\ref{sec:data}.
In the analysis, only the above projected noise bound is used. We write
$C_N(t)$ for the $(N+1)\times(N+1)$ matrix obtained from the Legendre
projection. In the algorithms, $C(t)$ denotes the same matrix when the
value of $N$ is fixed.

\subsection{Dimension-reduced Legendre--Tikhonov method}

The first approach combines the Legendre dimension reduction with
Tikhonov regularization. After projecting the solution onto a finite
Legendre basis in the price variable $S$, the original partial
differential equation in $(t,S)$ is transformed into a finite-dimensional
system of ordinary differential equations in time for the coefficient
vector $\mathbf u(t)$, namely \eqref{eq:projected-ode-vector}. We then compute $\mathbf u(t)$ by minimizing a
least-squares residual of this reduced system, together with a penalty
enforcing the projected initial data and an $H^2$ regularization term. For this reason, we call the method the dimension-reduced
Legendre--Tikhonov method. The regularization parameter $\alpha$ is chosen
with the guidance of the L-curve criterion described in
Subsection~\ref{alphaN}. This criterion balances the residual of the
reduced ordinary differential system and the projected initial-data
constraint against the $H^2$ regularization norm. After the regularized coefficient vector $\mathbf u_\alpha(t)$ is
computed, the approximate Black--Scholes solution is reconstructed from
the Legendre expansion as in \eqref{3.11}
\[
    u_\alpha^N(t,S)=\sum_{n=0}^N u_{\alpha,n}(t)\ell_n(S).
\]
The option price profile at the expiration time $T$ is then obtained as in \eqref{3.12}
\[
    u_\alpha^N(T,S)=\sum_{n=0}^N u_{\alpha,n}(T)\ell_n(S).
\]
This procedure is summarized in Algorithm \ref{alg:tikhonov-continuous}.

\begin{algorithm}[ht]
\caption{dimension-reduced Legendre--Tikhonov method}
\label{alg:tikhonov-continuous}
\begin{algorithmic}[1]
\State Choose a regularization parameter $\alpha$ and a cutoff number $N$ as in Subsection \ref{alphaN}.
\State Find $\mathbf u_\alpha\in H^2(0,T;\mathbb R^{N+1})$ as the minimizer of $J_{\alpha,N}^\delta(\mathbf u)$ defined in \eqref{4.1}.

\State Reconstruct the approximate Black--Scholes solution by
\[
    u_{\alpha}^N(t,S)
    =
    \sum_{n=0}^N u_{\alpha,n}(t)\ell_n(S),
    \qquad 0\leq t\leq T.
\]

\State Compute the desired option price profile at maturity $T$ by
\[
    u(T,S)\approx u_\alpha^N(T,S)
    =
    \sum_{n=0}^N u_{\alpha,n}(T)\ell_n(S).
\]

\State \Return $u_\alpha^N(t,S)$ and $u_\alpha^N(T,S)$.
\end{algorithmic}
\end{algorithm}

\subsection{Stability and convergence of the Legendre--Tikhonov method}

We now justify the dimension-reduced Legendre--Tikhonov method for a fixed
truncation number $N$. For $\alpha>0$, define
\begin{equation}\label{4.1}
    J_{\alpha,N}^\delta({\bf v})
    =
    \int_0^T
    |{\bf v}'(t)-C_N(t){\bf v}(t)|^2\,dt
    +
    |{\bf v}(0)-{\bf u}^0_\delta|^2      
    +
    \alpha \|{\bf v}\|_{H^2(0,T;\mathbb R^{N+1})}^2 ,
\end{equation}
for ${\bf v}\in H^2(0,T;\mathbb R^{N+1})$.

\begin{Theorem}[Stability and convergence for fixed $N$]\label{thm:tikhonov-fixed-N}
Assume that $C_N\in W^{1,\infty}(0,T;\mathbb R^{(N+1)\times(N+1)})$.
Then, for every $\alpha>0$, the functional $J_{\alpha,N}^\delta$ has a
unique minimizer
\[
    {\bf u}_{\alpha,N}^\delta
    \in H^2(0,T;\mathbb R^{N+1}).
\]
Moreover, the minimizer depends continuously on the projected data. More
precisely, if ${\bf u}_{\alpha,N}^{\delta,1}$ and
${\bf u}_{\alpha,N}^{\delta,2}$ are the minimizers corresponding to
projected initial vectors $({\bf u}^0_\delta)^1$ and $({\bf u}^0_\delta)^2$,
then
\[
    \|{\bf u}_{\alpha,N}^{\delta,1}
      -
      {\bf u}_{\alpha,N}^{\delta,2}\|_{H^2(0,T;\mathbb R^{N+1})}
    \le
    \frac{1}{2\sqrt{\alpha}}
    |({\bf u}^0_\delta)^1-({\bf u}^0_\delta)^2|.
\]

Let ${\bf u}_N^\dagger$ be the exact solution of the reduced system
\[
    ({\bf u}_N^\dagger)'(t)=C_N(t){\bf u}_N^\dagger(t),
    \qquad
    {\bf u}_N^\dagger(0)={\bf u}^0 .
\]
Assume that $|{\bf u}^0_\delta-{\bf u}^0|\le \delta$. If
$\alpha=\alpha(\delta)$ satisfies
\[
    \alpha(\delta)\to 0,
    \qquad
    \frac{\delta^2}{\alpha(\delta)}\to 0
    \quad \text{as } \delta\to 0,
\]
then
\[
    {\bf u}_{\alpha(\delta),N}^\delta
    \to
    {\bf u}_N^\dagger
    \quad
    \text{strongly in } H^2(0,T;\mathbb R^{N+1}).
\]
In particular,
\[
    {\bf u}_{\alpha(\delta),N}^\delta(T)
    \to
    {\bf u}_N^\dagger(T)
    \quad
    \text{in } \mathbb R^{N+1}.
\]
\end{Theorem}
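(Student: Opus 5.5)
The plan is to treat $J_{\alpha,N}^\delta$ as a strictly convex quadratic functional on the Hilbert space $H:=H^2(0,T;\mathbb R^{N+1})$. The argument then uses three standard tools: the Riesz/Lax--Milgram argument for existence and uniqueness, the variational (Euler--Lagrange) identity for stability, and the classical Tikhonov convergence argument for the last part.

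\textbf{Existence and uniqueness.} Write $J_{\alpha,N}^\delta(\mathbf v)=\|K\mathbf v-\mathbf f\|_Y^2+\alpha\|\mathbf v\|_H^2$, where
\[
K\mathbf v=(\mathbf v'-C_N\mathbf v,\ \mathbf v(0))\in Y:=L^2(0,T;\mathbb R^{N+1})\times\mathbb R^{N+1},\qquad \mathbf f=(0,\mathbf u^0_\delta).
\]
The operator $K$ is bounded on $H$, since $C_N\in L^\infty$ and the trace $\mathbf v\mapsto\mathbf v(0)$ is continuous on $H^1\supset H^2$. The assumption $C_N\in W^{1,\infty}$ gives even more than is needed here. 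The bilinear form $a(\mathbf v,\mathbf w)=\langle K\mathbf v,K\mathbf w\rangle_Y+\alpha\langle\mathbf v,\mathbf w\rangle_H$ is bounded and coercive with constant $\alpha$. By Lax--Milgram there is a unique $\mathbf u$ with $a(\mathbf u,\mathbf w)=\langle\mathbf f,K\mathbf w\rangle_Y$ for all $\mathbf w\in H$. This variational equation is exactly the first-order optimality condition. Strict convexity, coming from the $\alpha\|\cdot\|^2$ term, shows that this $\mathbf u$ is the unique minimizer.

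\textbf{Stability.} Subtract the Euler--Lagrange equations for the two data vectors and test with $\mathbf w=\mathbf u^1-\mathbf u^2=:\mathbf z$. Denoting $\mathbf d=(\mathbf u^0_\delta)^1-(\mathbf u^0_\delta)^2$, this gives
\[
\|\mathbf z'-C_N\mathbf z\|_{L^2}^2+|\mathbf z(0)|^2+\alpha\|\mathbf z\|_H^2=\langle \mathbf d,\mathbf z(0)\rangle .
\]
By Young's inequality, $\langle\mathbf d,\mathbf z(0)\rangle\le |\mathbf z(0)|^2+\tfrac14|\mathbf d|^2$. The $|\mathbf z(0)|^2$ terms cancel, leaving $\alpha\|\mathbf z\|_H^2\le\tfrac14|\mathbf d|^2$, which is precisely the constant $\frac{1}{2\sqrt\alpha}$ in the statement.

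\textbf{Convergence.} First note that $\mathbf u_N^\dagger\in H$. It solves a linear ODE with $L^\infty$ coefficients, so it is Lipschitz. Then $(\mathbf u_N^\dagger)'=C_N\mathbf u_N^\dagger\in W^{1,\infty}$ because $C_N\in W^{1,\infty}$. Hence $\mathbf u_N^\dagger\in W^{2,\infty}\subset H^2$; this is where the hypothesis on $C_N$ is really used. Comparing with $\mathbf u_N^\dagger$ and writing $\mathbf u_\delta$ for the minimizer, minimality gives
\[
\|K\mathbf u_\delta-\mathbf f\|^2+\alpha\|\mathbf u_\delta\|_H^2\le J(\mathbf u_N^\dagger)\le\delta^2+\alpha\|\mathbf u_N^\dagger\|_H^2 .
\]
Consequently $\|\mathbf u_\delta\|_H^2\le\delta^2/\alpha+\|\mathbf u_N^\dagger\|_H^2$ is bounded, and $\|K\mathbf u_\delta-\mathbf f\|\to0$. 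Take any subsequence. It has a further subsequence converging weakly in $H$ to some $\mathbf v$, and strongly in $C^1([0,T])$ by the compact embedding $H^2\hookrightarrow C^1$ on a bounded interval. Passing to the limit shows that $\mathbf v'=C_N\mathbf v$ and $\mathbf v(0)=\mathbf u^0$. By uniqueness for the linear ODE, $\mathbf v=\mathbf u_N^\dagger$, so the whole family converges weakly.

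The main obstacle is upgrading weak to strong convergence in $H^2$. For this we use $\limsup\|\mathbf u_\delta\|_H^2\le\|\mathbf u_N^\dagger\|_H^2$, which follows from the bound above together with $\delta^2/\alpha\to0$. Combined with weak lower semicontinuity of the norm, this gives $\|\mathbf u_\delta\|_H\to\|\mathbf u_N^\dagger\|_H$. In a Hilbert space, weak convergence plus convergence of norms yields strong convergence. Finally, $\mathbf u_\delta(T)\to\mathbf u_N^\dagger(T)$ follows from the continuity of the trace at $t=T$ on $H^2(0,T)$.
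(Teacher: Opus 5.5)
Your proposal is correct and follows the paper's proof essentially step by step: a quadratic-functional argument for existence and uniqueness, testing the difference of the normal (Euler--Lagrange) equations for stability (your Young-inequality step is equivalent to the paper's $a^2+b^2\ge 2ab$ argument), and the classical Tikhonov comparison with $\mathbf u_N^\dagger$, followed by weak-limit identification and the upgrade via convergence of norms. Your explicit check that $\mathbf u_N^\dagger\in W^{2,\infty}\subset H^2(0,T;\mathbb R^{N+1})$ is where the hypothesis $C_N\in W^{1,\infty}$ is actually needed, and the paper uses this fact only implicitly when it evaluates $J_{\alpha,N}^\delta$ at $\mathbf u_N^\dagger$.
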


\begin{proof}
Since $C_N\in W^{1,\infty}(0,T;\mathbb R^{(N+1)\times(N+1)})$, the map
${\bf v}\mapsto {\bf v}'-C_N(t){\bf v}$ is bounded from
$H^2(0,T;\mathbb R^{N+1})$ to $L^2(0,T;\mathbb R^{N+1})$. The trace map
${\bf v}\mapsto {\bf v}(0)$ is also bounded from
$H^2(0,T;\mathbb R^{N+1})$ to $\mathbb R^{N+1}$. Hence
$J_{\alpha,N}^\delta$ is continuous on
$H^2(0,T;\mathbb R^{N+1})$.

Moreover, the term
$\alpha\|{\bf v}\|_{H^2(0,T;\mathbb R^{N+1})}^2$ makes
$J_{\alpha,N}^\delta$ coercive and strictly convex. Therefore
$J_{\alpha,N}^\delta$ has a unique minimizer
${\bf u}_{\alpha,N}^\delta\in H^2(0,T;\mathbb R^{N+1})$.

We next prove the stability estimate. Define the bounded linear operator
\[
    \mathcal A_N:
    H^2(0,T;\mathbb R^{N+1})
    \to
    L^2(0,T;\mathbb R^{N+1})\times \mathbb R^{N+1}
\]
by
\[
    \mathcal A_N{\bf v}
    =
    \big({\bf v}'-C_N(t){\bf v},{\bf v}(0)\big).
\]
We equip $L^2(0,T;\mathbb R^{N+1})\times\mathbb R^{N+1}$ with the norm
\[
    \|({\bf f},{\bf a})\|^2
    =
    \int_0^T |{\bf f}(t)|^2\,dt
    +
    |{\bf a}|^2 .
\]
Thus, for ${\bf v}\in H^2(0,T;\mathbb R^{N+1})$,
\[
    \|\mathcal A_N{\bf v}\|^2
    =
    \int_0^T |{\bf v}'(t)-C_N(t){\bf v}(t)|^2\,dt
    +
    |{\bf v}(0)|^2 .
\]
Let $\mathcal A_N^*$ denote the Hilbert-space adjoint of
$\mathcal A_N$, and let $I$ denote the identity operator on
$H^2(0,T;\mathbb R^{N+1})$. With these notations, the Tikhonov functional can
be written as
\[
    J_{\alpha,N}^\delta({\bf v})
    =
    \|\mathcal A_N{\bf v}-(0,{\bf u}^0_\delta)\|^2
    +
    \alpha\|{\bf v}\|_{H^2(0,T;\mathbb R^{N+1})}^2 .
\]
Thus the minimizer satisfies the normal equation
\[
    (\mathcal A_N^*\mathcal A_N+\alpha I)
    {\bf u}_{\alpha,N}^\delta
    =
    \mathcal A_N^*(0,{\bf u}^0_\delta).
\]

Let ${\bf u}_{\alpha,N}^{\delta,1}$ and
${\bf u}_{\alpha,N}^{\delta,2}$ be the minimizers corresponding to the
projected noisy data $({\bf u}^0_\delta)^1$ and
$({\bf u}^0_\delta)^2$. Subtracting the two normal equations gives
\begin{equation*}
    (\mathcal A_N^*\mathcal A_N+\alpha I)
    \left(
    {\bf u}_{\alpha,N}^{\delta,1}
    -
    {\bf u}_{\alpha,N}^{\delta,2}
    \right)   
    =
    \mathcal A_N^*
    \big(0,({\bf u}^0_\delta)^1-({\bf u}^0_\delta)^2\big).
\end{equation*}
Taking the inner product with
${\bf u}_{\alpha,N}^{\delta,1}
-
{\bf u}_{\alpha,N}^{\delta,2}$
in $H^2(0,T;\mathbb R^{N+1})$, we obtain
\begin{multline*}
    \left\|
    \mathcal A_N
    \left(
    {\bf u}_{\alpha,N}^{\delta,1}
    -
    {\bf u}_{\alpha,N}^{\delta,2}
    \right)
    \right\|^2
    +
    \alpha
    \left\|
    {\bf u}_{\alpha,N}^{\delta,1}
    -
    {\bf u}_{\alpha,N}^{\delta,2}
    \right\|_{H^2(0,T;\mathbb R^{N+1})}^2  \\
    \leq
    |({\bf u}^0_\delta)^1-({\bf u}^0_\delta)^2|
    \left\|
    \mathcal A_N
    \left(
    {\bf u}_{\alpha,N}^{\delta,1}
    -
    {\bf u}_{\alpha,N}^{\delta,2}
    \right)
    \right\| .
\end{multline*}
Using $a^2+b^2\geq 2ab$ with
\[
    a=
    \left\|
    \mathcal A_N
    \left(
    {\bf u}_{\alpha,N}^{\delta,1}
    -
    {\bf u}_{\alpha,N}^{\delta,2}
    \right)
    \right\|,
    \qquad
    b=
    \sqrt{\alpha}
    \left\|
    {\bf u}_{\alpha,N}^{\delta,1}
    -
    {\bf u}_{\alpha,N}^{\delta,2}
    \right\|_{H^2(0,T;\mathbb R^{N+1})},
\]
we get
\begin{multline*}
    2\sqrt{\alpha}
    \left\|
    \mathcal A_N
    \left(
    {\bf u}_{\alpha,N}^{\delta,1}
    -
    {\bf u}_{\alpha,N}^{\delta,2}
    \right)
    \right\|
    \left\|
    {\bf u}_{\alpha,N}^{\delta,1}
    -
    {\bf u}_{\alpha,N}^{\delta,2}
    \right\|_{H^2(0,T;\mathbb R^{N+1})}  \\
    \leq
    |({\bf u}^0_\delta)^1-({\bf u}^0_\delta)^2|
    \left\|
    \mathcal A_N
    \left(
    {\bf u}_{\alpha,N}^{\delta,1}
    -
    {\bf u}_{\alpha,N}^{\delta,2}
    \right)
    \right\| .
\end{multline*}
If the last factor involving $\mathcal A_N$ is zero, then the desired
estimate is immediate. Otherwise, dividing by this factor gives
\[
    \|{\bf u}_{\alpha,N}^{\delta,1}
      -
      {\bf u}_{\alpha,N}^{\delta,2}\|_{H^2(0,T;\mathbb R^{N+1})}
    \leq
    \frac{1}{2\sqrt{\alpha}}
    |({\bf u}^0_\delta)^1-({\bf u}^0_\delta)^2|.
\]
This proves the stability estimate.

We now prove convergence. Since ${\bf u}_N^\dagger$ solves the exact
reduced system, we have
\[
    ({\bf u}_N^\dagger)'(t)-C_N(t){\bf u}_N^\dagger(t)=0,
    \qquad
    {\bf u}_N^\dagger(0)={\bf u}^0 .
\]
By the minimizing property of ${\bf u}_{\alpha,N}^\delta$,
\begin{multline}\label{eq:tikhonov-min-bound}
    \int_0^T
    |({\bf u}_{\alpha,N}^\delta)'(t)
    -C_N(t){\bf u}_{\alpha,N}^\delta(t)|^2\,dt
    +
    |{\bf u}_{\alpha,N}^\delta(0)-{\bf u}^0_\delta|^2  \\
    +
    \alpha
    \|{\bf u}_{\alpha,N}^\delta\|_{H^2(0,T;\mathbb R^{N+1})}^2
    \leq
    |{\bf u}^0_\delta-{\bf u}^0|^2
    +
    \alpha
    \|{\bf u}_N^\dagger\|_{H^2(0,T;\mathbb R^{N+1})}^2 .
\end{multline}
Since $|{\bf u}^0_\delta-{\bf u}^0|\leq\delta$, we obtain
\begin{equation}\label{eq:H2-bound-Tikhonov}
    \|{\bf u}_{\alpha,N}^\delta\|_{H^2(0,T;\mathbb R^{N+1})}^2
    \leq
    \frac{\delta^2}{\alpha}
    +
    \|{\bf u}_N^\dagger\|_{H^2(0,T;\mathbb R^{N+1})}^2 .
\end{equation}
Let $\alpha=\alpha(\delta)$ satisfy
$\alpha(\delta)\to0$ and $\delta^2/\alpha(\delta)\to0$. Then
\eqref{eq:H2-bound-Tikhonov} implies that
${\bf u}_{\alpha(\delta),N}^\delta$ is bounded in
$H^2(0,T;\mathbb R^{N+1})$.

Returning to \eqref{eq:tikhonov-min-bound}, its right-hand side tends to
zero because $|{\bf u}^0_\delta-{\bf u}^0|\leq\delta$, $\delta\to0$, and
$\alpha(\delta)\to0$. Since all terms on the left-hand side are
nonnegative, we get
\[
    ({\bf u}_{\alpha(\delta),N}^\delta)'
    -
    C_N(t){\bf u}_{\alpha(\delta),N}^\delta
    \to 0
    \quad
    \text{in } L^2(0,T;\mathbb R^{N+1}),
\]
and
\[
    {\bf u}_{\alpha(\delta),N}^\delta(0)
    -
    {\bf u}^0_\delta
    \to 0
    \quad
    \text{in } \mathbb R^{N+1}.
\]
Since $|{\bf u}^0_\delta-{\bf u}^0|\leq\delta\to0$, it follows that
\[
    {\bf u}_{\alpha(\delta),N}^\delta(0)
    \to
    {\bf u}^0
    \quad
    \text{in } \mathbb R^{N+1}.
\]

We now identify the weak limit. Since
${\bf u}_{\alpha(\delta),N}^\delta$ is bounded in
$H^2(0,T;\mathbb R^{N+1})$, every sequence $\delta_j\to0$ has a
subsequence, still denoted by $\delta_j$, such that
\[
    {\bf u}_{\alpha(\delta_j),N}^{\delta_j}
    \rightharpoonup
    {\bf w}
    \quad
    \text{weakly in } H^2(0,T;\mathbb R^{N+1}).
\]
The residual convergence and the weak convergence imply
\[
    {\bf w}'(t)-C_N(t){\bf w}(t)=0
    \quad
    \text{in } L^2(0,T;\mathbb R^{N+1}).
\]
The trace convergence gives
\[
    {\bf w}(0)={\bf u}^0 .
\]
Thus ${\bf w}$ solves
\[
    {\bf w}'(t)=C_N(t){\bf w}(t),
    \qquad
    {\bf w}(0)={\bf u}^0 .
\]
By uniqueness of this finite-dimensional initial-value problem,
${\bf w}={\bf u}_N^\dagger$. Hence the whole family satisfies
\[
    {\bf u}_{\alpha(\delta),N}^\delta
    \rightharpoonup
    {\bf u}_N^\dagger
    \quad
    \text{weakly in } H^2(0,T;\mathbb R^{N+1}).
\]

It remains to upgrade weak convergence to strong convergence. From
\eqref{eq:H2-bound-Tikhonov} and $\delta^2/\alpha(\delta)\to0$, we have
\[
    \limsup_{\delta\to0}
    \|{\bf u}_{\alpha(\delta),N}^\delta\|_{H^2(0,T;\mathbb R^{N+1})}
    \leq
    \|{\bf u}_N^\dagger\|_{H^2(0,T;\mathbb R^{N+1})}.
\]
On the other hand, weak lower semicontinuity of the norm gives
\[
    \|{\bf u}_N^\dagger\|_{H^2(0,T;\mathbb R^{N+1})}
    \leq
    \liminf_{\delta\to0}
    \|{\bf u}_{\alpha(\delta),N}^\delta\|_{H^2(0,T;\mathbb R^{N+1})}.
\]
Therefore
\[
    \|{\bf u}_{\alpha(\delta),N}^\delta\|_{H^2(0,T;\mathbb R^{N+1})}
    \to
    \|{\bf u}_N^\dagger\|_{H^2(0,T;\mathbb R^{N+1})}.
\]
In a Hilbert space, weak convergence together with convergence of the
norms implies strong convergence. Hence
\[
    {\bf u}_{\alpha(\delta),N}^\delta
    \to
    {\bf u}_N^\dagger
    \quad
    \text{strongly in } H^2(0,T;\mathbb R^{N+1}).
\]
Finally, the convergence at $t=T$ follows from the continuous embedding
$H^2(0,T;\mathbb R^{N+1})\hookrightarrow C^1([0,T];\mathbb R^{N+1})$.
\end{proof}

\begin{Corollary}[Convergence of the reduced terminal profile]
Under the assumptions of Theorem~\ref{thm:tikhonov-fixed-N}, the
reconstruction
\[
    u_{\alpha,N}^\delta(T,S)
    =
    \sum_{n=0}^N u_{\alpha,N,n}^\delta(T)\ell_n(S)
\]
converges in $L^2(0,S_{\rm max})$ to the exact reduced terminal profile
\[
    u_N^\dagger(T,S)
    =
    \sum_{n=0}^N u_{N,n}^\dagger(T)\ell_n(S).
\]
Indeed, by the orthonormality of $\{\ell_n\}_{n=0}^N$ in
$L^2(0,S_{\rm max})$,
\[
    \|u_{\alpha,N}^\delta(T,\cdot)-u_N^\dagger(T,\cdot)\|_{L^2(0,S_{\rm max})}
    =
    |{\bf u}_{\alpha,N}^\delta(T)-{\bf u}_N^\dagger(T)|
    \to 0 .
\]
\end{Corollary}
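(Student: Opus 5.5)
The plan is to reduce the statement to Theorem~\ref{thm:tikhonov-fixed-N} and then use orthonormality of the Legendre functions; essentially no new analysis is needed. First I would check that the functions $\mathbf u_{\alpha,N}^\delta(T)$ and $\mathbf u_N^\dagger(T)$ are well defined as vectors in $\mathbb R^{N+1}$. This holds because both $\mathbf u_{\alpha,N}^\delta$ and $\mathbf u_N^\dagger$ belong to $H^2(0,T;\mathbb R^{N+1})$, and this space embeds continuously into $C^1([0,T];\mathbb R^{N+1})$. For $\mathbf u_N^\dagger$ this membership follows from $C_N\in W^{1,\infty}$: the ODE gives $(\mathbf u_N^\dagger)'' = C_N'\mathbf u_N^\dagger + C_N C_N\mathbf u_N^\dagger\in L^\infty$. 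The corollary is read with $\alpha=\alpha(\delta)$ chosen as in the theorem, so $\alpha(\delta)\to0$ and $\delta^2/\alpha(\delta)\to0$.

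Next I would write the difference of the two reconstructions as a finite sum,
\[
u_{\alpha,N}^\delta(T,S)-u_N^\dagger(T,S)=\sum_{n=0}^N\big(u_{\alpha,N,n}^\delta(T)-u_{N,n}^\dagger(T)\big)\ell_n(S).
\]
Squaring and integrating over $(0,S_{\rm max})$ removes all cross terms, since $\int_0^{S_{\rm max}}\ell_n\ell_m\,dS=\delta_{mn}$. This is Parseval's identity on the span of $\ell_0,\dots,\ell_N$, and it gives
\[
\|u_{\alpha,N}^\delta(T,\cdot)-u_N^\dagger(T,\cdot)\|_{L^2(0,S_{\rm max})}=|\mathbf u_{\alpha,N}^\delta(T)-\mathbf u_N^\dagger(T)|.
\]

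Finally I would invoke the last assertion of Theorem~\ref{thm:tikhonov-fixed-N}, namely $\mathbf u_{\alpha(\delta),N}^\delta(T)\to\mathbf u_N^\dagger(T)$ in $\mathbb R^{N+1}$. That assertion itself follows from strong $H^2$ convergence together with the trace bound $|\mathbf v(T)|\le C\|\mathbf v\|_{H^2}$. Combining it with the identity above gives the $L^2(0,S_{\rm max})$ convergence.

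I see no genuine obstacle. The only point needing care is to state explicitly that the limit is taken along the parameter choice $\alpha=\alpha(\delta)$, since for fixed $\alpha$ there is no convergence to $\mathbf u_N^\dagger$. It should also be noted that the target is the reduced profile $u_N^\dagger$, not the true $u(T,\cdot)$, so the Legendre tail plays no role here.
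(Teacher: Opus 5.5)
Your proposal is correct and follows the paper's argument: Parseval's identity on $\operatorname{span}\{\ell_0,\dots,\ell_N\}$ turns the $L^2(0,S_{\rm max})$ error into $|\mathbf u_{\alpha,N}^\delta(T)-\mathbf u_N^\dagger(T)|$, and this tends to zero by the final assertion of Theorem~\ref{thm:tikhonov-fixed-N} along the parameter choice $\alpha=\alpha(\delta)$. Your extra check that $\mathbf u_N^\dagger\in H^2(0,T;\mathbb R^{N+1})$ is worthwhile, since the paper uses it without justification: it follows from $(\mathbf u_N^\dagger)''=C_N'\mathbf u_N^\dagger+C_N^2\mathbf u_N^\dagger$ being bounded.
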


The preceding result is a fixed-$N$ regularization theorem. The error with
respect to the original Black--Scholes solution also contains the
Legendre truncation error:
\begin{multline*}
\|u_{\alpha,N}^{\delta}(T,\cdot)-u(T,\cdot)\|_{L^2(0,S_{\rm max})}
\le
\|u_{\alpha,N}^{\delta}(T,\cdot)-u_N^\dagger(T,\cdot)\|_{L^2(0,S_{\rm max})} \\
+
\|u_N^\dagger(T,\cdot)-u(T,\cdot)\|_{L^2(0,S_{\rm max})}.
\end{multline*}
The first term is controlled by the theorem above, while the second term
is the Legendre truncation error.

\subsection{Reduced PINN solver for comparison}
\label{subsec:pinn-secondary}

We also consider a reduced PINN solver as a secondary computational
approach for the Legendre coefficient system. This solver is not used in
the convergence analysis. Instead, it is included to examine whether a
neural-network parametrization can provide a useful numerical solver after
the price-dimensional Legendre reduction.

Let
\[
    {\bf u}(t)=\big(u_0(t),u_1(t),\ldots,u_N(t)\big)^\top
\]
be the coefficient vector in the reduced system
\[
    {\bf u}'(t)=C_N(t){\bf u}(t).
\]
We approximate ${\bf u}(t)$ by a neural network
\[
    \mathcal N_\theta(t)
    =
    \big(
        u_{\theta,0}(t),
        u_{\theta,1}(t),
        \ldots,
        u_{\theta,N}(t)
    \big)^\top .
\]
The network is trained by minimizing the residual of the reduced ordinary
differential system together with the mismatch with the projected noisy
initial vector ${\bf u}^0_\delta$. The derivatives of
$\mathcal N_\theta$ with respect to $t$ are computed by automatic
differentiation.

The reduced PINN solver should be viewed as a neural-network
parametrization of the reduced coefficient problem. Unlike the
Legendre--Tikhonov method, we do not prove a separate stability or
convergence theorem for the trained PINN. Its role in this paper is
numerical: it provides a secondary solver for comparison with the
Legendre--Tikhonov method, which is the theoretically justified
regularized reconstruction method.

After training, let $\theta^\ast$ denote the optimized parameters. The
approximate solution of the Black--Scholes equation is reconstructed by
\[
    u_{\theta^\ast}^N(t,S)
    =
    \sum_{n=0}^N
    u_{\theta^\ast,n}(t)\ell_n(S),
\]
and the terminal option price profile is obtained by evaluating this expression at
$t=T$. This procedure is summarized in
Algorithm~\ref{alg:pinn-reduced-system}.

\begin{algorithm}[h]
\caption{Reduced PINN solver for the Legendre coefficient system}
\label{alg:pinn-reduced-system}
\begin{algorithmic}[1]

\State Choose a neural-network architecture and training parameters as
specified in the numerical section.

\State Minimize the reduced PINN loss described in the numerical section
with respect to $\theta$ using a gradient-based optimizer.

\State Denote the trained network by $\mathcal N_{\theta^\ast}$.

\State Reconstruct the approximate Black--Scholes solution by
\[
    u_{\theta^\ast}^N(t,S)
    =
    \sum_{n=0}^N
    \big(\mathcal N_{\theta^\ast}(t)\big)_n \ell_n(S),
    \qquad 0\leq t\leq T.
\]

\State Compute the desired terminal option price profile by
\[
    u(T,S)\approx u_{\theta^\ast}^N(T,S)
    =
    \sum_{n=0}^N
    \big(\mathcal N_{\theta^\ast}(T)\big)_n \ell_n(S).
\]

\State \Return $u_{\theta^\ast}^N(t,S)$ and $u_{\theta^\ast}^N(T,S)$.

\end{algorithmic}
\end{algorithm}

\section{Numerical study}\label{sec5}

In the numerical experiments, the final time $T$ is varied from test to
test. For each value of $T$, we use the truncated asset-price interval $[0, S_{\rm max}] = [0, 10].$
In all tests, we set the risk-free interest rate
$
    r=0.05.
$
The local volatility is chosen as a simple time-dependent volatility smile,
\begin{equation}\label{eq:numerical-local-volatility}
    \sigma(t,S)
    =
    \sigma_0
    \sqrt{
    1+\eta e^{-t/T}
    \left(\frac{S-S_{\rm ref}}{S_{\rm ref}}\right)^2
    },
    \qquad
    \sigma_0=0.2,\quad
    \eta=0.25,\quad
    S_{\rm ref}=\frac{S_{\max}}{2}.
\end{equation}
This choice is positive and produces a mild volatility smile around the
reference price $S_{\rm ref}$. In this context, the term ``volatility smile"
means that the local volatility is lowest near $S_{\rm ref}$ and increases
as the asset price moves away from $S_{\rm ref}$. The time-dependent
factor makes the smile effect gradually decrease as time increases.

The spatial grid is fixed by taking
\[
    N_S=100,
    \qquad
    \Delta S=\frac{S_{\max}}{N_S}=0.1.
\]
The spatial grid points are then defined by
\[
    S_i=(i-1)\Delta S,\qquad i=1,\ldots,N_S+1,
\]
so that $S_1=0$ and $S_{N_S+1}=S_{\max}$.

For the time discretization, we choose the number of time steps $N_t$ so
that the time step is approximately $5\times 10^{-4}$, namely
\[
    N_t=\operatorname{round}\left(\frac{T}{5\times 10^{-4}}\right),
    \qquad
    \Delta t=\frac{T}{N_t}.
\]
The time grid points are
\[
    t_n=(n-1)\Delta t,\qquad n=1,\ldots,N_t+1,
\]
so that $t_1=0$ and $t_{N_t+1}=T$.

For the above parameters, the sufficient stability condition becomes
\[
    \Delta t
    \left(
        \frac{\sigma^2(t_n,S_i) S_i^2}{(\Delta S)^2}
        +
        \frac{rS_i}{\Delta S}
        +
        r
    \right)
    \leq 1.
\]
Using the conservative bounds $S_i\leq S_{\max}=10$ and
\[
    \sigma(t,S)\leq
    \sigma_{\max}
    =
    0.2\sqrt{1+0.25}
    \approx 0.2236,
\]
we obtain
\[
    \Delta t
    \left(
        \frac{\sigma_{\max}^2 S_{\max}^2}{(\Delta S)^2}
        +
        \frac{rS_{\max}}{\Delta S}
        +
        r
    \right)
    =
    5\times 10^{-4}
    \left(
        \frac{0.2236^2\cdot 10^2}{0.1^2}
        +
        \frac{0.05\cdot 10}{0.1}
        +
        0.05
    \right)
    \approx 0.2525<1.
\]
Hence, the explicit data-generation scheme satisfies the stated sufficient stability condition for all tested values of $T$, since the time step
$\Delta t$ is kept approximately fixed at $5\times 10^{-4}$.

\subsection{Data generation}\label{sec:data}

To assess the performance of the proposed method, we generate synthetic
data from a prescribed terminal profile. Let $\Phi$ be the terminal payoff
on the truncated price interval $[0,S_{\max}]$. We first solve the
standard terminal-value Black--Scholes problem
\begin{equation}
    u_t+\frac12\sigma^2(t,S)S^2u_{SS}+rS u_S-ru=0,\qquad
    (t,S)\in(0,T)\times(0,S_{\max}),
\end{equation}
with terminal condition
\[
    u(T,S)=\Phi(S).
\]
The resulting numerical solution is denoted by $u_{\rm true}(t,S)$.

We compute this reference solution backward in time by an explicit
finite-difference scheme. The terminal condition is imposed by
\[
    u_i^{N_t+1}=\Phi(S_i),
    \qquad i=1,\ldots,N_S+1.
\]
Then, for $n=N_t,N_t-1,\ldots,1$, the values $u_i^n$ are computed from the
already known values at time level $t_{n+1}$. For each interior point
$i=2,\ldots,N_S$, we use the explicit formula

\begin{multline}\label{eq:explicit-direct-update}
    u_i^n
    =
    u_i^{n+1}
    +
    \Delta t\,
    \frac12\sigma^2(t_{n+1},S_i)S_i^2
    \frac{u_{i+1}^{n+1}-2u_i^{n+1}+u_{i-1}^{n+1}}{(\Delta S)^2}
    \\
    +
    \Delta t\, rS_i
    \frac{u_{i+1}^{n+1}-u_i^{n+1}}{\Delta S}
    -
    r\Delta t\,u_i^{n+1},
    \qquad i=2,\ldots,N_S,
\end{multline}
to compute the interior values of $u^n.$
After these interior values are computed, the endpoint values are obtained
by linear extrapolation:
\begin{equation}\label{eq:explicit-direct-boundary}
    u_1^n=2u_2^n-u_3^n,
    \qquad
    u_{N_S+1}^n=2u_{N_S}^n-u_{N_S-1}^n.
\end{equation}
Repeating this process until $n=1$ gives the computed profile at $t=0$.
We then define the initial data for the forward-time reconstruction
problem by
\[
    u^0(S_i)=u_{\rm true}(0,S_i)=u_i^1,
    \qquad i=1,\ldots,N_S+1.
\]
Thus, the synthetic data are generated consistently from the same
Black--Scholes model used in the reconstruction. In the reconstruction
step, only the initial profile, or its noisy version, is used as input,
and the goal is to recover the option price profile at maturity $u(T,S)$.

To model measurement error, we add multiplicative noise to the initial
profile. Given a noise level $\delta > 0$, the noisy data are defined by
\begin{equation}\label{noise}
    u_\delta^{0}(S_i)
    =
    u^0(S_i)\left(1+\delta \xi_i \right) \qquad i=1,\ldots,N_S+1,
\end{equation}
where $\{\xi_i\}$ are independent random variables uniformly distributed
in $[-1,1]$.

\subsection{Numerical implementation details}

In the MATLAB implementation of Algorithm~\ref{alg:tikhonov-continuous},
the reduced ordinary differential system is discretized on the time grid.
The unknown is the vector containing all coefficient values
\[
    \mathbf U
    =
    \big(
    \mathbf u(t_1)^\top,
    \mathbf u(t_2)^\top,
    \ldots,
    \mathbf u(t_{N_t+1})^\top
    \big)^\top .
\]
The residual of the reduced system $\mathbf u'(t)-C(t)\mathbf u(t)$ is
approximated by a first-order finite difference in time. The initial
condition is enforced by adding the term
$|\mathbf u(t_1)-\mathbf u^0_\delta|^2$. The
$H^2(0,T)$-regularization term is discretized using the values of
$\mathbf u$, the first finite difference in time, and the second finite
difference in time. Therefore, the continuous Tikhonov minimization is
converted into a finite-dimensional least-squares problem of the form
\[
    \min_{\mathbf U}
    \left\|
        A_{\rm aug}\mathbf U-\mathbf b_{\rm aug}
    \right\|_2^2,
\]
where
\[
    A_{\rm aug}
    =
    \begin{pmatrix}
        A_{\rm ode}\\
        A_{\rm init}\\
        \sqrt{\alpha}A_{H^2}
    \end{pmatrix},
    \qquad
    \mathbf b_{\rm aug}
    =
    \begin{pmatrix}
        0\\
        \mathbf u^0_\delta\\
        0
    \end{pmatrix}.
\]
Here $A_{\rm ode}$ represents the discretized reduced ordinary
differential system, $A_{\rm init}$ represents the projected initial
condition, and $A_{H^2}$ represents the discrete $H^2$-regularization
operator. In the computations, the augmented least-squares problem is
solved in MATLAB using the backslash operator applied to
$A_{\rm aug}\mathbf U\approx \mathbf b_{\rm aug}$.

For the reduced PINN computations, we use a fully connected feed-forward
network with one input, three hidden layers, 50 neurons in each hidden
layer, and $N+1$ outputs. The activation function is $\tanh$. The network
output is denoted by $\mathcal N_\theta(t)$ and is trained using the
discrete loss
\begin{equation}\label{PINN}
    \mathcal L_c(\theta)
    =
    \frac{1}{N_c}
    \sum_{k=1}^{N_c}
    \left|
        \frac{d}{dt}\mathcal N_\theta(t_k)
        -
        C(t_k)\mathcal N_\theta(t_k)
    \right|^2
    +
    \left|
        \mathcal N_\theta(0)-{\bf u}^0_\delta
    \right|^2 .
\end{equation}
Here $\{t_k\}_{k=1}^{N_c}\subset(0,T)$ are collocation points. The
derivative $d\mathcal N_\theta/dt$ is computed by automatic
differentiation. In all reported PINN computations, we use $N_c=1000$
collocation points, train for 5000 epochs, and use learning rate
$10^{-3}$. The reduced PINN results are reported only as a secondary
computational comparison with the Legendre--Tikhonov reconstruction.

\subsection{Numerical examples}

We now show several numerical results obtained by Algorithms \ref{alg:tikhonov-continuous} and \ref{alg:pinn-reduced-system}. The Legendre--Tikhonov method is the main reconstruction method, while the reduced PINN solver is included as a secondary numerical comparison. After presenting all tests, we summarize the relative $L^2$ errors in Table~\ref{tab:numerical-errors}.

{\bf Test 1: smooth compactly supported profile.}
In the first test, we use a smooth compactly supported terminal profile.
This example is not meant to model a standard financial payoff. Instead,
it provides a benchmark for evaluating the stability and accuracy of the
reconstruction methods when the target profile is regular.

We set the final time to $T=1$ and define the true terminal option-price
profile by
\[
    \Phi(S)
    =
    \begin{cases}
    \displaystyle
    \exp\left(1-\frac{1}{1-\left(\frac{S-5}{2}\right)^2}\right),
    &
    |S-5|<2,\\[2mm]
    0,
    &
    |S-5|\geq 2.
    \end{cases}
\]
The function is centered at $S=5$, has maximum value equal to one, and
vanishes outside the interval $(3,7)$. Since this profile is smooth and
compactly supported, it gives a clean test case for studying the effect
of noise and regularization without the additional complications caused
by nonsmooth payoff kinks.

Figure~\ref{fig:test1} presents the numerical results for Test 1
with $10\%$ and $35\%$ noise added to the initial data. For each noise
level, we show the noisy observed initial profile and compare the exact
option price profile at $T$ with the reconstructions obtained by the Tikhonov
method in Algorithm~\ref{alg:tikhonov-continuous} and the reduced PINN
solver in Algorithm~\ref{alg:pinn-reduced-system}.

\begin{figure}[ht]
    \centering
    \subfloat[Initial data, $\delta=10\%$]{
        \includegraphics[width=0.44\textwidth]{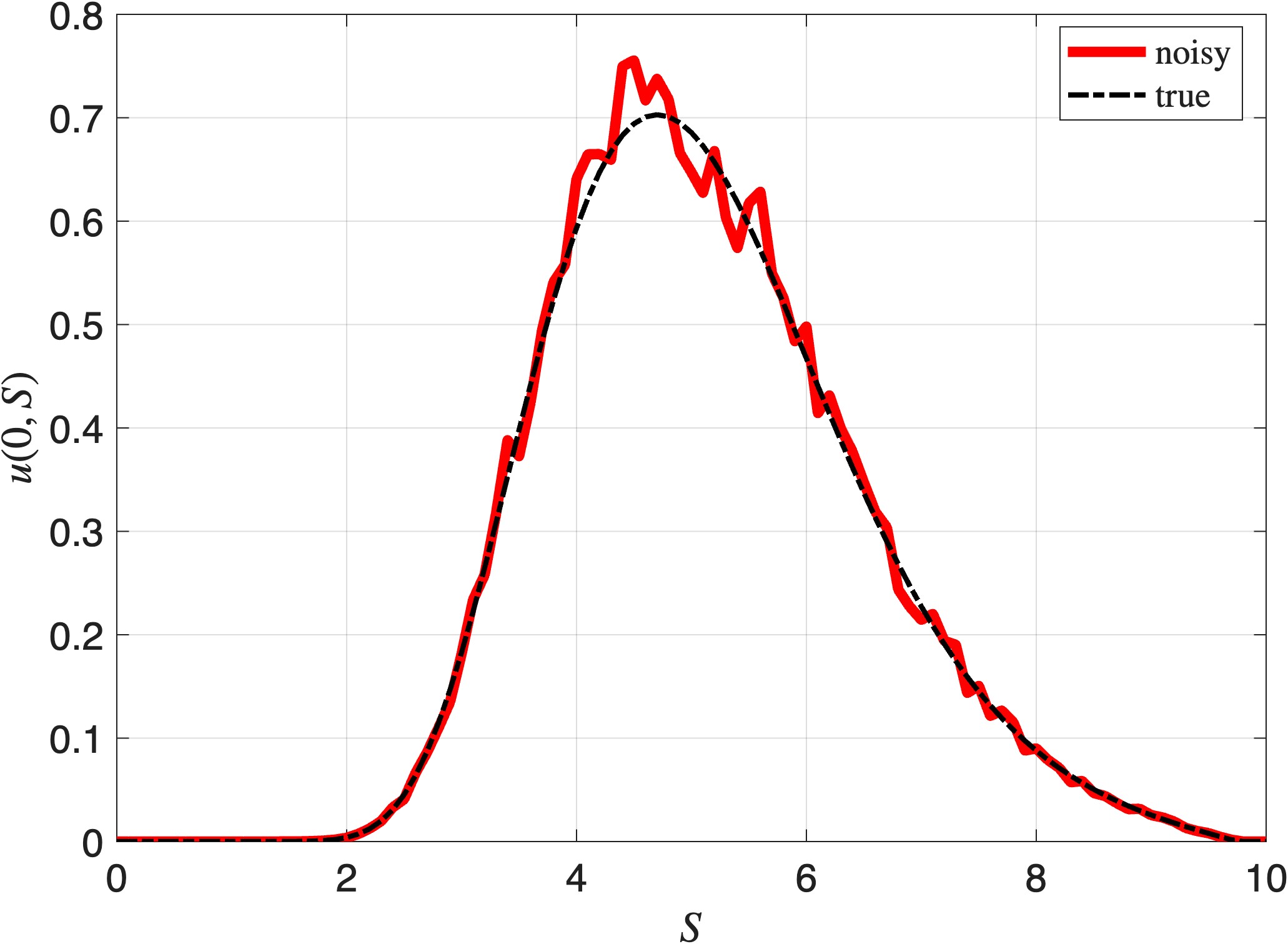}
    }
    \hfill
    \subfloat[Reconstruction of $u(T,S)$ from $u^0_\delta$, $\delta=10\%$]{
        \includegraphics[width=0.44\textwidth]{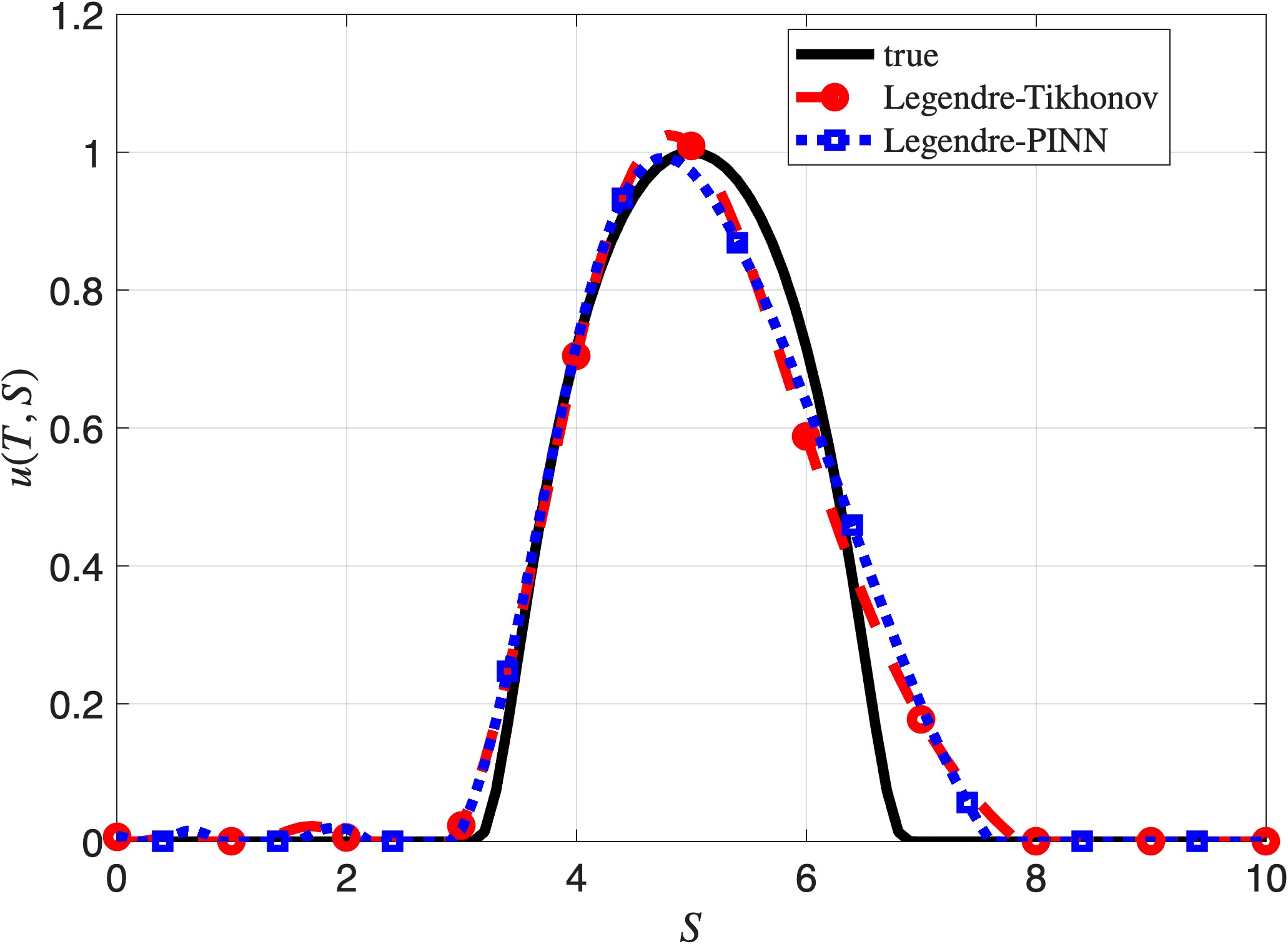}
    }
    
    \subfloat[Initial data, $\delta=35\%$]{
        \includegraphics[width=0.44\textwidth]{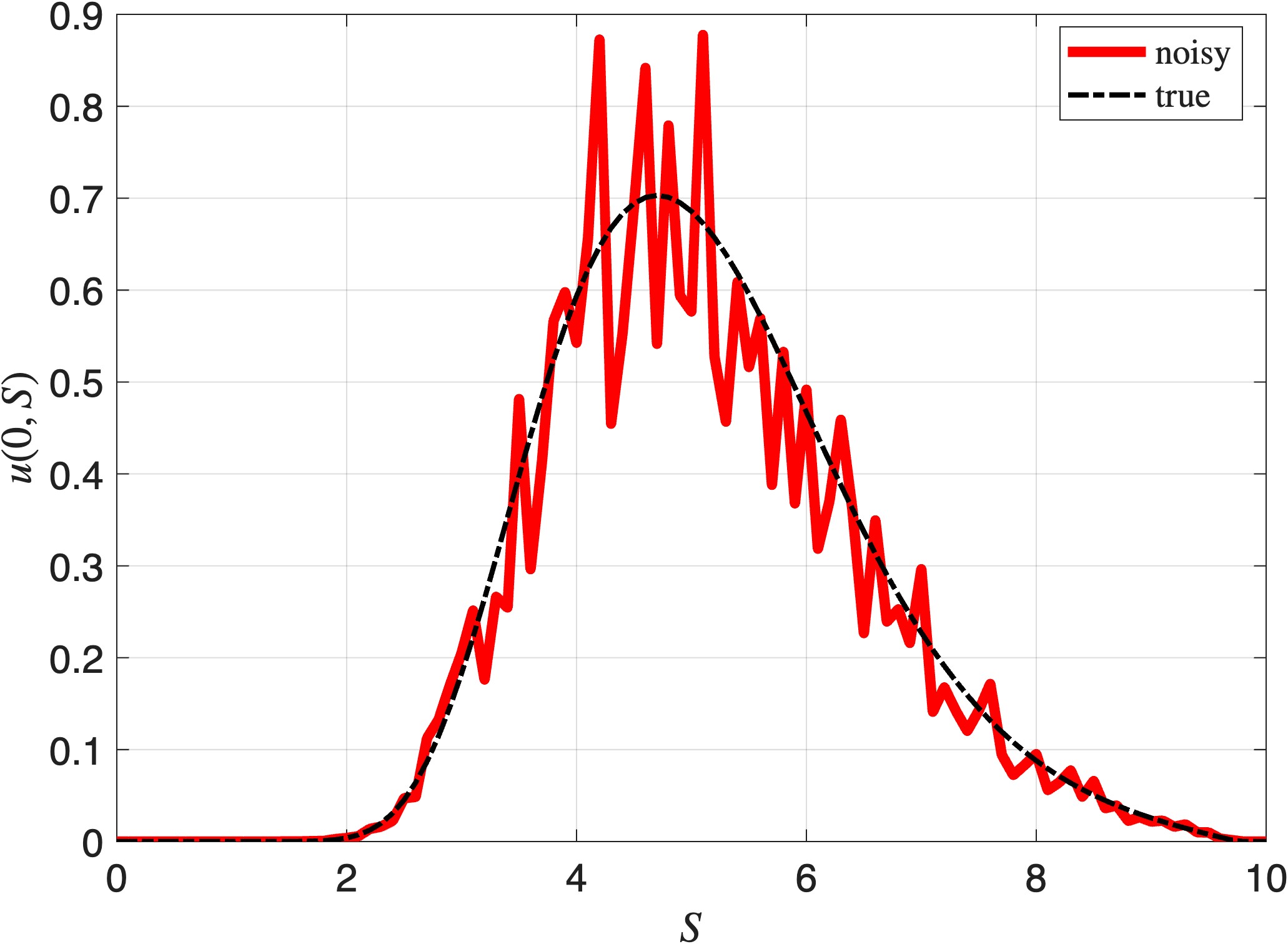}
    }
    \hfill
    \subfloat[Reconstruction of $u(T,S)$ from $u^0_\delta$, $\delta=35\%$]{
        \includegraphics[width=0.44\textwidth]{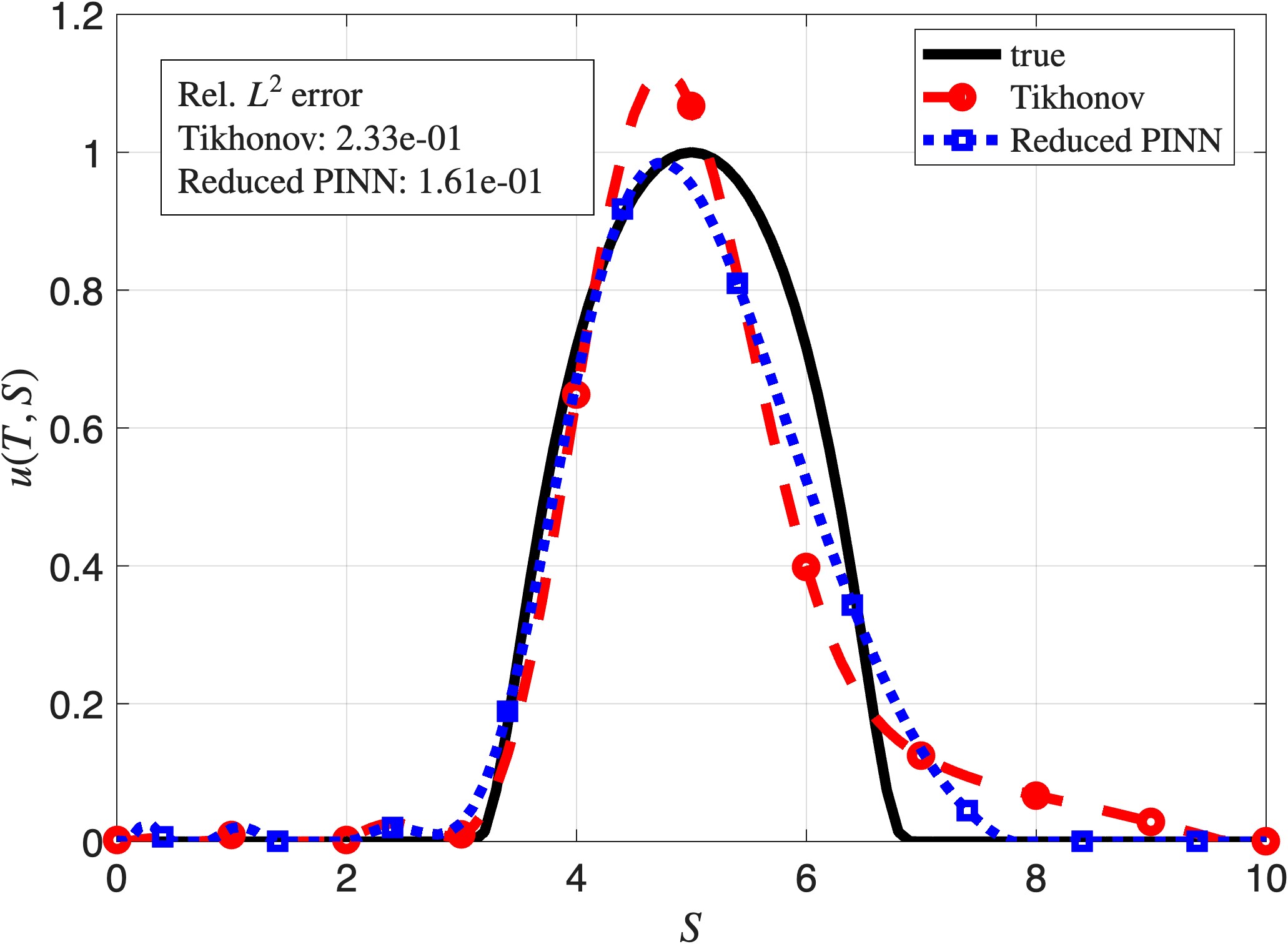}
    }
\caption{
Numerical results for Test 1 with final time $T=1$ and noisy initial data
generated according to \eqref{noise}. For $\delta=10\%$, the relative
$L^2$ errors are $14.56\%$ for the Tikhonov method and $15.40\%$ for the
reduced PINN solver. For $\delta=35\%$, the corresponding errors are
$23.27\%$ and $16.09\%$, respectively.
}
    \label{fig:test1}
\end{figure}

Figure~\ref{fig:test1} demonstrates that both proposed algorithms
are able to recover the main structure of the option--price profile at maturity $T$ from
noisy initial data when $T=1$. For the $10\%$ noise case, the noisy
initial profile is visibly perturbed, but both the Tikhonov method and
the reduced PINN solver reconstruct the future profile accurately. In
particular, the location of the main peak, the compact support, and the
overall shape of $u(T,S)$ are well captured. The Tikhonov method gives a
relative $L^2$ error of $14.56\%$, while the reduced PINN solver gives a
comparable relative $L^2$ error of $15.40\%$. Thus, for this moderate
noise level, the two methods have similar accuracy, with the Tikhonov
reconstruction being slightly closer to the exact profile near the peak.

For the more challenging $35\%$ noise case, the initial data contain
strong oscillations. Nevertheless, the Legendre--Tikhonov method and the reduced PINN solver still recover the dominant
future-time structure. The reconstructions correctly identify the main
support region and the principal peak of the exact solution, although the
accuracy deteriorates compared with the $10\%$ case. The Tikhonov method
produces a relative $L^2$ error of $23.27\%$, while the reduced PINN
solver achieves a smaller error of $16.09\%$. In this higher-noise case,
the reduced PINN reconstruction is smoother and less sensitive to the
high-frequency oscillations in the noisy data, whereas the Tikhonov
reconstruction shows a more noticeable overshoot near the peak. Overall,
these results show that the Legendre--Tikhonov method provides a stable
reconstruction of the future profile from noisy initial observations, even
for the non-small final time $T=1$. The reduced PINN solver gives an
additional comparison after the same Legendre reduction.

{\bf Test 2: European butterfly spread payoff.}
In the second test, we use a European butterfly spread payoff, which is a
standard, financially meaningful compactly supported payoff. The final
time is set to $T=1.5$, and the terminal profile is defined by
\[
    \Phi(S)
    =
    \begin{cases}
    0, & 0\leq S<3,\\
    S-3, & 3\leq S<5,\\
    7-S, & 5\leq S<7,\\
    0, & S\geq 7.
    \end{cases}
\]
Equivalently, this is the payoff of a butterfly spread with strike prices
$K_1=3$, $K_2=5$, and $K_3=7$. Compared with the smooth profile in Test 1,
this payoff is less regular because it is piecewise linear and has kink
points at the strike prices. Hence, Test 2 evaluates the performance of
the Tikhonov and reduced PINN reconstructions in a more realistic setting
where the target future-time profile at maturity is generated from nonsmooth
financial data.

Figure~\ref{fig:test2} presents the numerical results for Test 2 with the
European butterfly spread payoff and final time $T=1.5$. In this experiment,
the observed initial data are contaminated by noise according to
\eqref{noise}, with two noise levels $5\%$ and $10\%$.
Figure~\ref{fig:test2a} and Figure~\ref{fig:test2c} show the exact and
noisy initial profiles, while Figure~\ref{fig:test2b} and
Figure~\ref{fig:test2d} compare the exact future-time profile $u(T,S)$
with the reconstructions obtained by the Tikhonov method and the reduced
PINN solver. Compared with Test 1, this example is more challenging
because the terminal butterfly payoff is only piecewise linear and
contains kink points at the strike prices. Moreover, the final time
$T=1.5$ is larger, which increases the instability of the forward-time
reconstruction problem.

\begin{figure}[ht]
    \centering
    \subfloat[\label{fig:test2a} Initial data, $\delta=5\%$]{
        \includegraphics[width=0.44\textwidth]{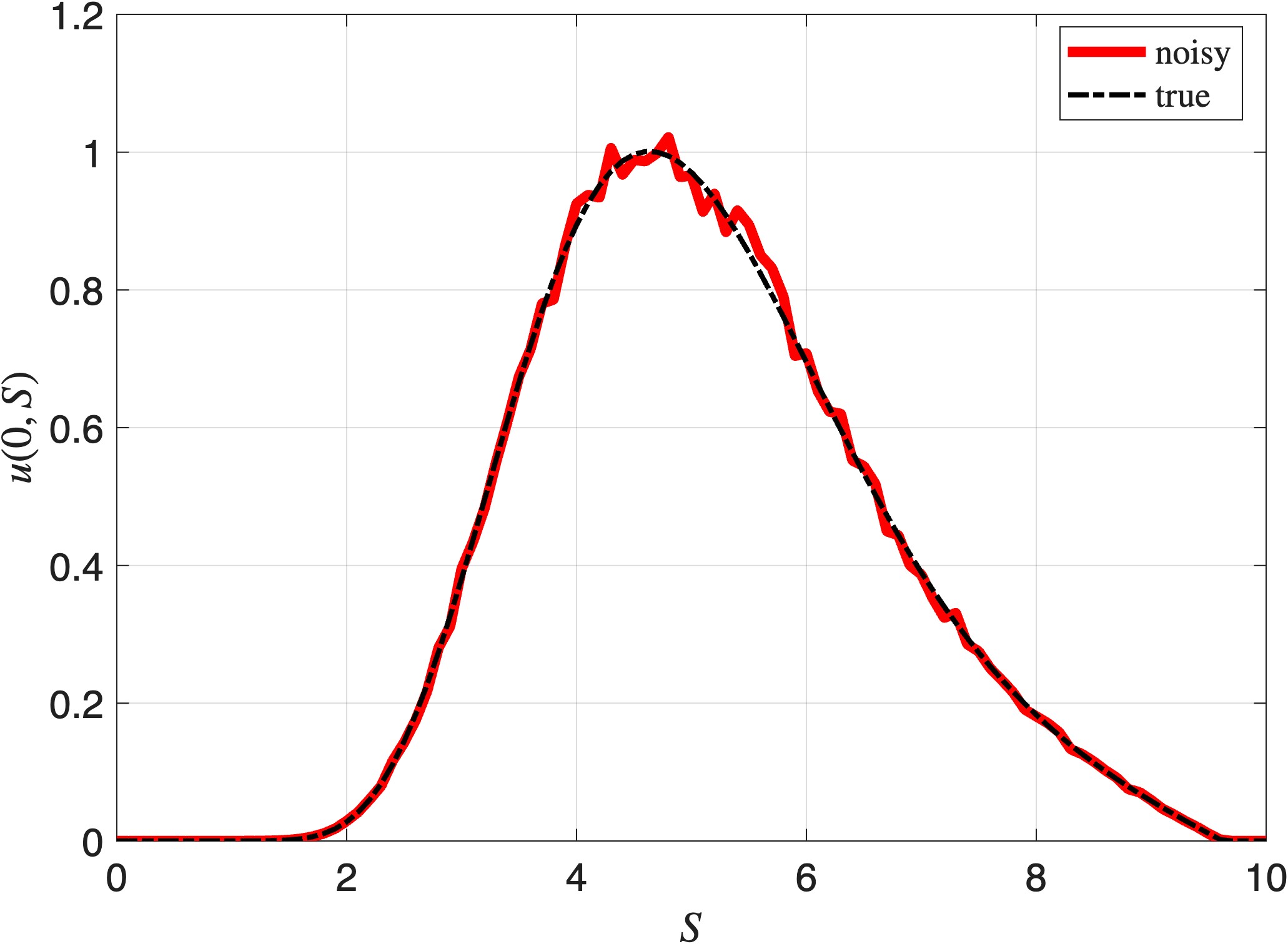}
    }
    \hfill
    \subfloat[\label{fig:test2b} Reconstruction of $u(T,S)$ from $u^0_\delta$, $\delta=5\%$]{
        \includegraphics[width=0.44\textwidth]{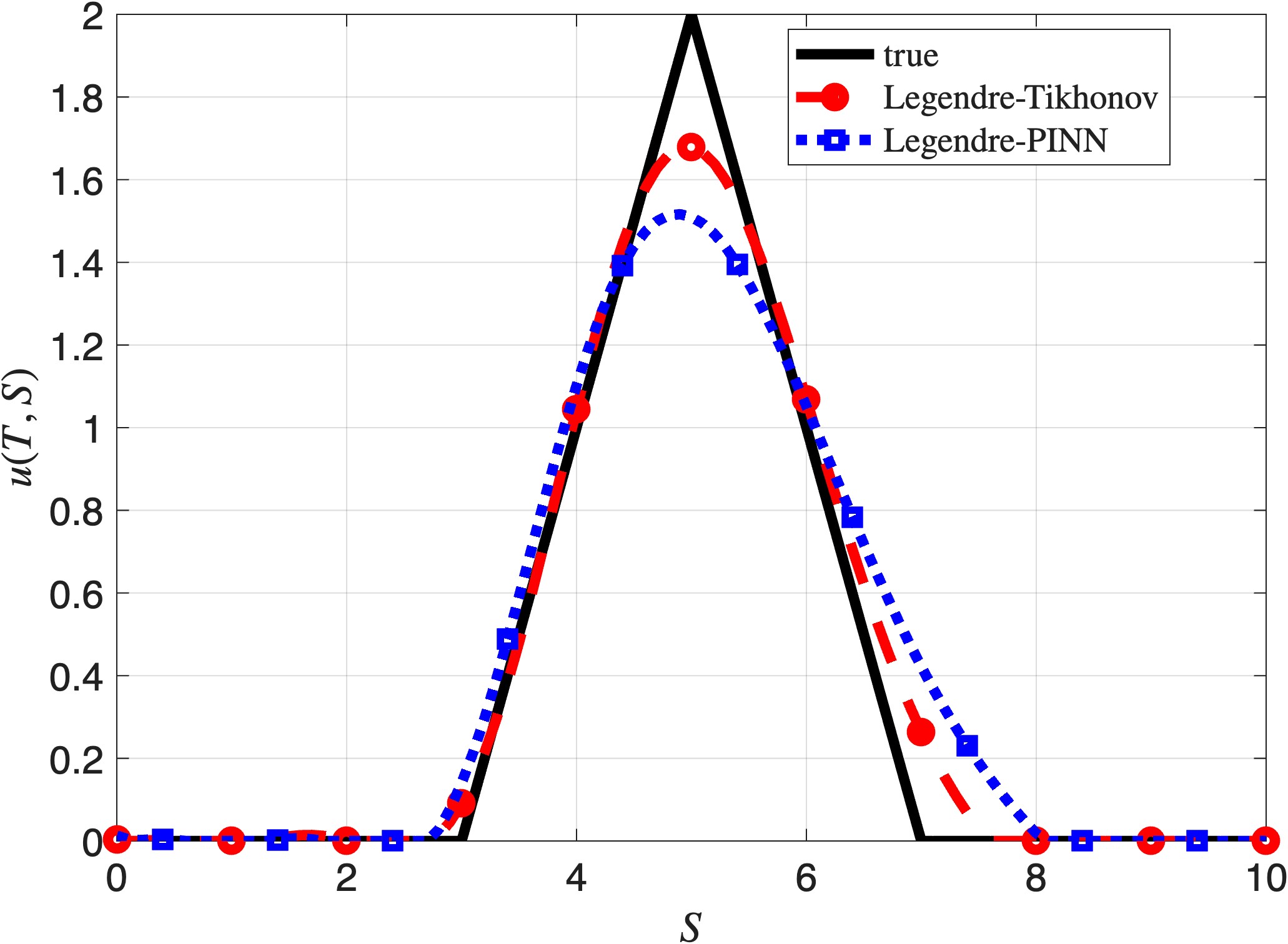}
    }
    
    \subfloat[\label{fig:test2c} Initial data, $\delta=10\%$]{
        \includegraphics[width=0.44\textwidth]{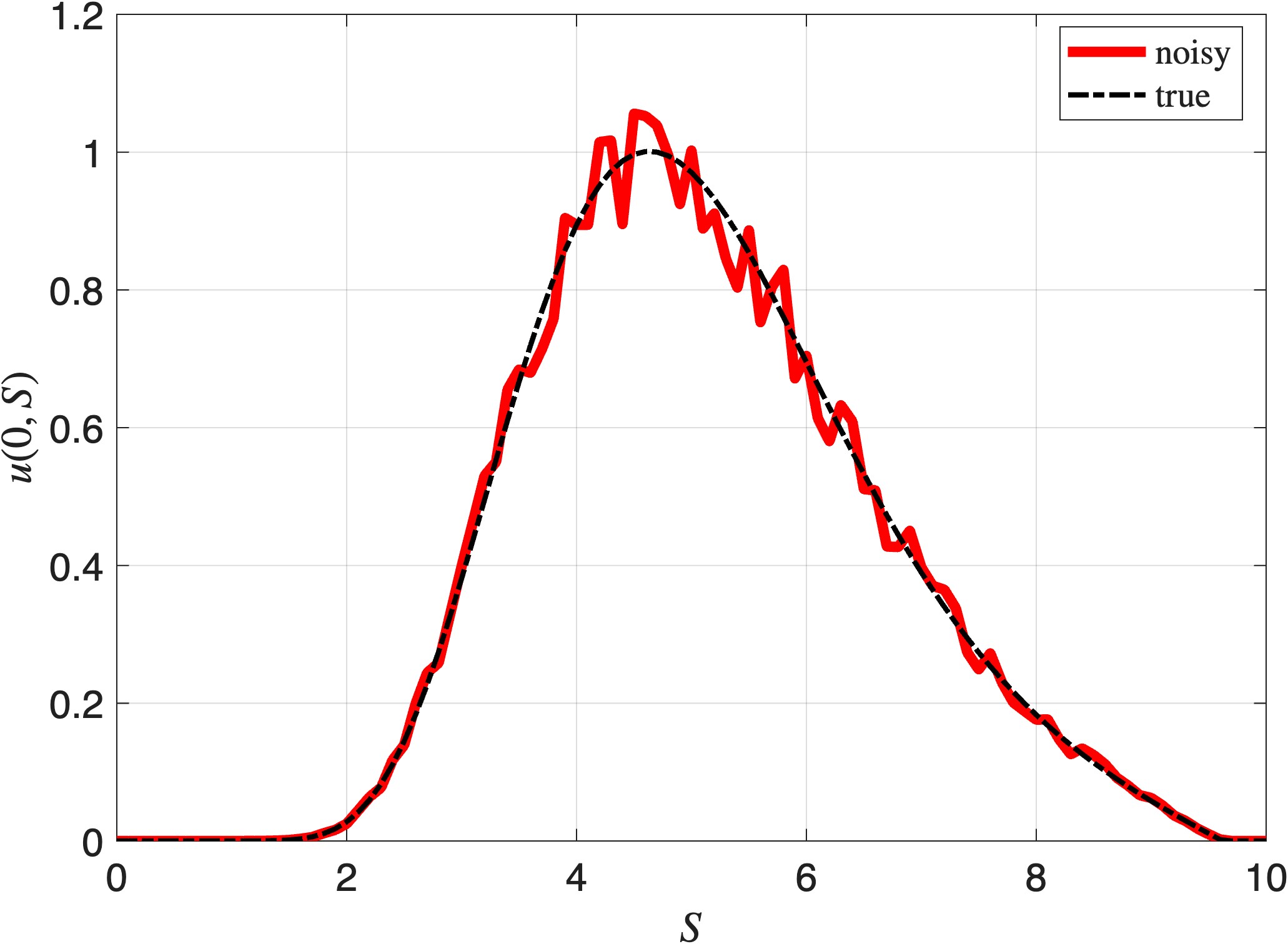}
    }
    \hfill
    \subfloat[\label{fig:test2d} Reconstruction of $u(T,S)$ from $u^0_\delta$, $\delta=10\%$]{
        \includegraphics[width=0.44\textwidth]{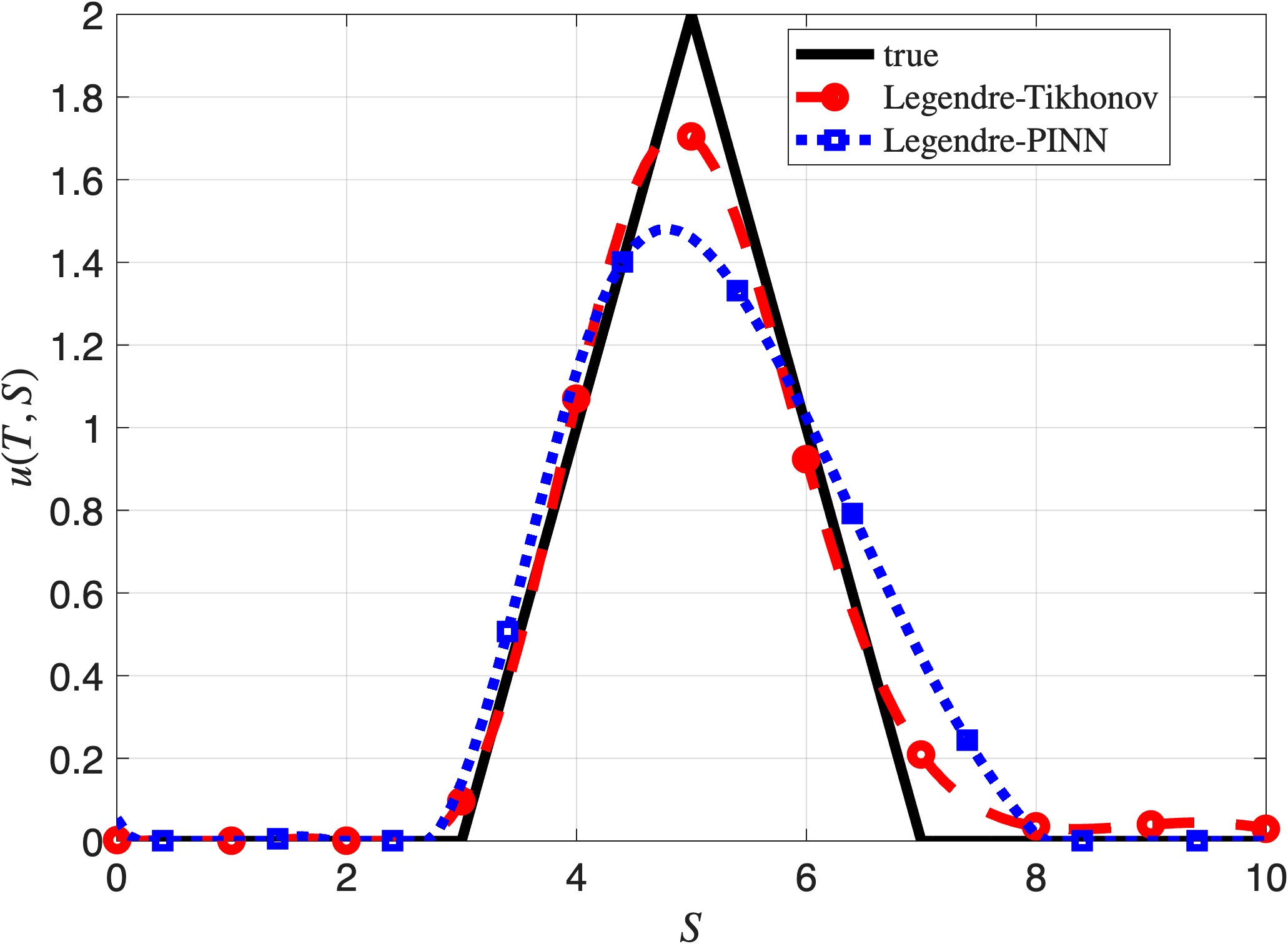}
    }
\caption{
Numerical results for Test 2 with final time $T=1.5$ and noisy initial data
generated according to \eqref{noise}. For $\delta=5\%$, the relative
$L^2$ errors are $11.05\%$ for the Tikhonov method and $20.53\%$ for the
reduced PINN solver. For $\delta=10\%$, the corresponding errors are
$10.22\%$ and $22.82\%$, respectively.
}
    \label{fig:test2}
\end{figure}

The results show that the Legendre--Tikhonov method and the reduced PINN solver
are able to recover the main structure of the future-time profile from noisy initial data. For the $5\%$ noise
case, the Tikhonov method achieves a relative $L^2$ error of $11.05\%$,
while the reduced PINN solver gives an error of $20.53\%$. For the
$10\%$ noise case, the corresponding errors are $10.22\%$ and $22.82\%$,
respectively. Thus, in this test, the Tikhonov method gives the more
accurate reconstruction in the relative $L^2$ sense for both noise
levels.

The Tikhonov reconstruction captures the peak height and the piecewise
linear shape of the future profile more closely. It also gives a sharper
approximation near the right edge of the support. The reduced PINN solver
still identifies the main support region and the peak location, but it
produces a smoother profile and tends to underestimate the maximum. This
smoothing effect is visible for both noise levels. Overall, Test 2
demonstrates that the proposed Legendre reduction remains effective for a
nonsmooth, financially meaningful payoff and for the final time $T=1.5$,
with the Tikhonov reconstruction performing particularly well in this
example.

We also observe that the Tikhonov error for the $10\%$ noise case is
slightly smaller than that for the $5\%$ noise case. This should not be
interpreted as a monotone improvement with respect to the noise level.
The noisy initial data are generated from random perturbations, and after
the Legendre truncation, the effective low-frequency component of the noise
may vary from one realization to another. Thus, for a single realization,
the reconstruction error need not increase monotonically with the nominal
noise level. The main observation is that the Legendre--Tikhonov method and the reduced PINN solver remain stable and
recover the dominant future-time structure in both noise regimes.

{\bf Test 3: European put payoff.}
In the third test, we consider a standard European put payoff. We set the
final time to $T=3$ and define the terminal profile by
\[
    \Phi(S)=(4-S)^+.
\]
Equivalently,
\[
    \Phi(S)
    =
    \begin{cases}
    4-S, & 0\leq S<4,\\
    0, & S\geq 4.
    \end{cases}
\]
This payoff has a kink at the strike price $K=4$. Compared with the
smooth profile in Test 1, it is less regular and therefore provides a
more realistic test for option-payoff reconstruction. The larger final
time $T=3$ also makes this example more challenging, since the
forward-time reconstruction becomes more unstable as the time horizon
increases.

Figure~\ref{fig:test4} presents the numerical results for Test 3 with the
European put payoff and final time $T=3$. The observed initial data are
perturbed according to \eqref{noise}, with noise levels $10\%$ and
$20\%$. Figure~\ref{fig:test4a} and Figure~\ref{fig:test4c} show the exact
and noisy initial profiles, while Figure~\ref{fig:test4b} and
Figure~\ref{fig:test4d} compare the exact future-time profile $u(T,S)$
with the reconstructions obtained by the Tikhonov method and the reduced
PINN solver. This test is challenging because the payoff has a kink at the
strike price and the final time $T=3$ is relatively large, which increases
the instability of the forward-time reconstruction problem.

\begin{figure}[ht]
    \centering
    \subfloat[\label{fig:test4a} Initial data, $\delta=10\%$]{
        \includegraphics[width=0.44\textwidth]{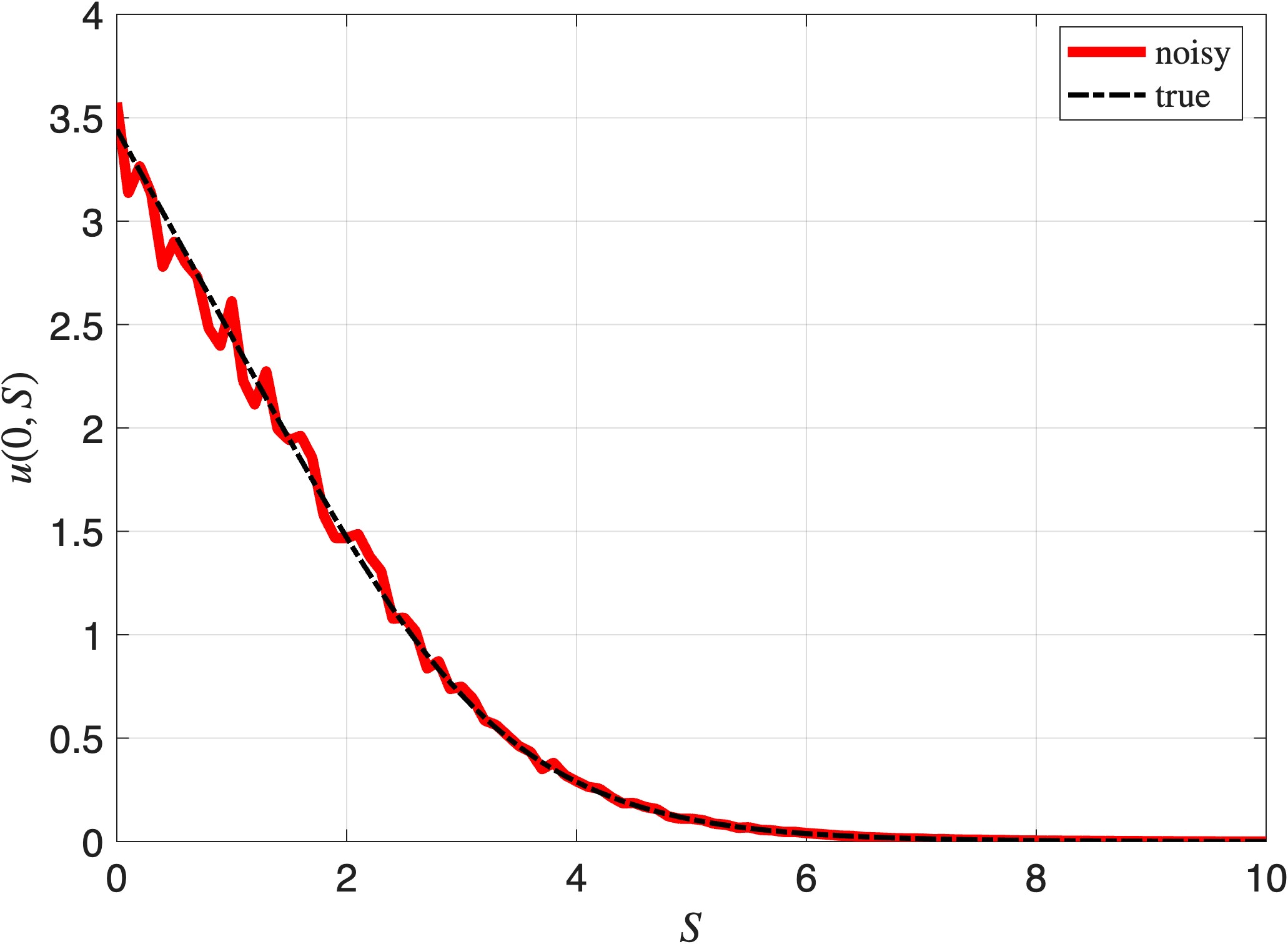}
    }
    \hfill
    \subfloat[\label{fig:test4b} Reconstruction of $u(T,S)$ from $u^0_\delta$, $\delta=10\%$]{
        \includegraphics[width=0.44\textwidth]{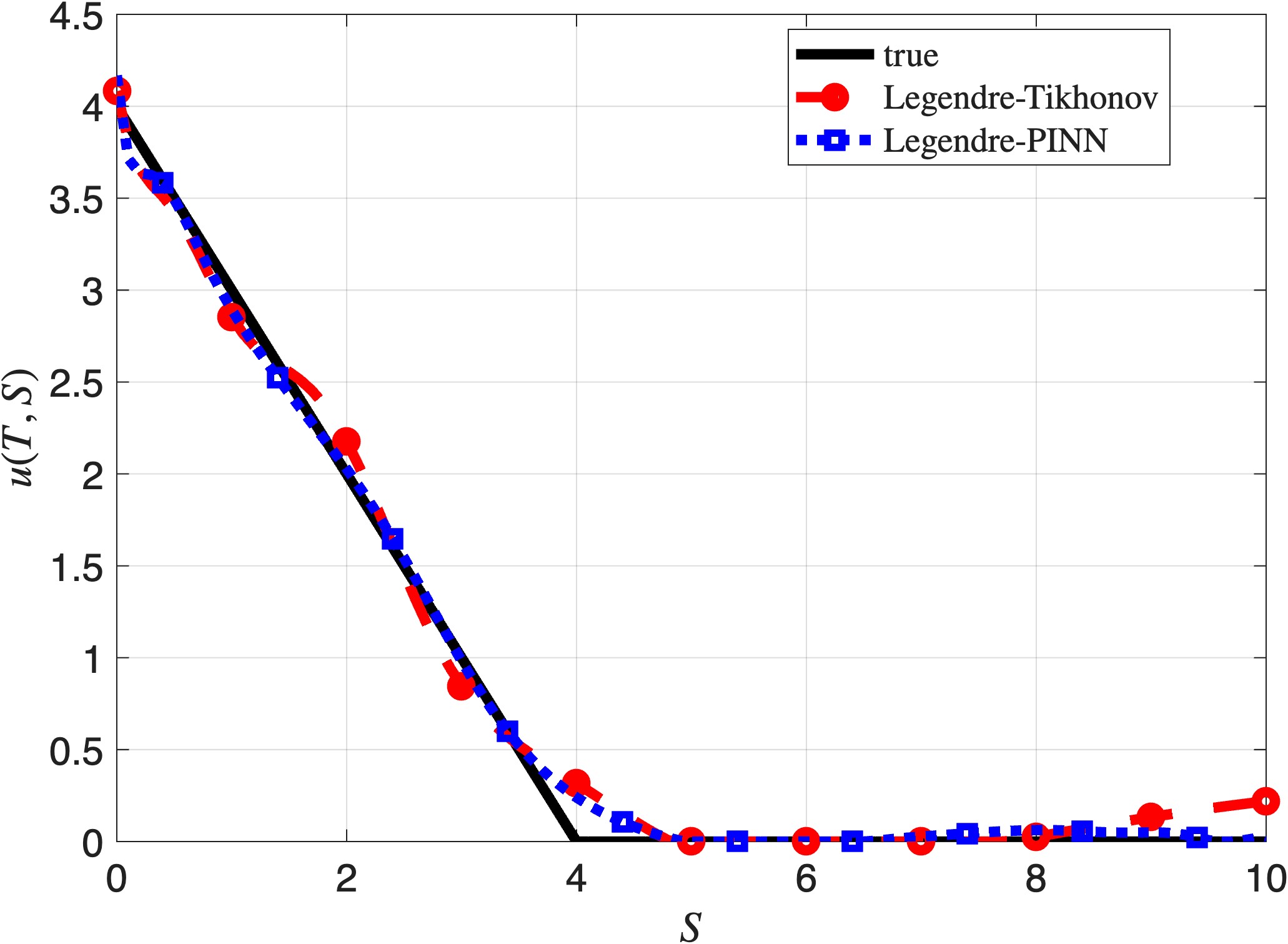}
    }
    
    \subfloat[\label{fig:test4c} Initial data, $\delta=20\%$]{
        \includegraphics[width=0.44\textwidth]{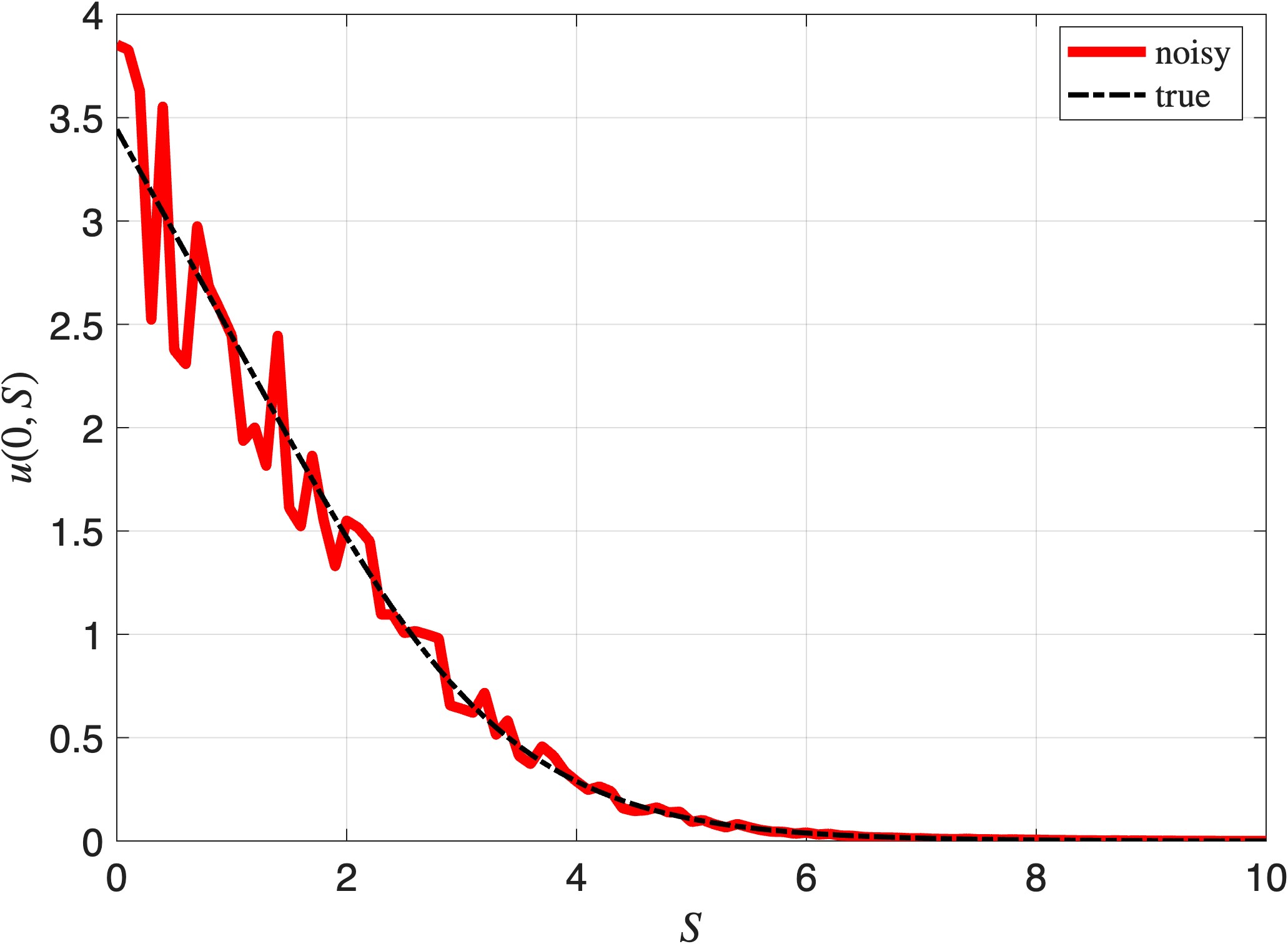}
    }
    \hfill
    \subfloat[\label{fig:test4d} Reconstruction of $u(T,S)$ from $u^0_\delta$, $\delta=20\%$]{
        \includegraphics[width=0.44\textwidth]{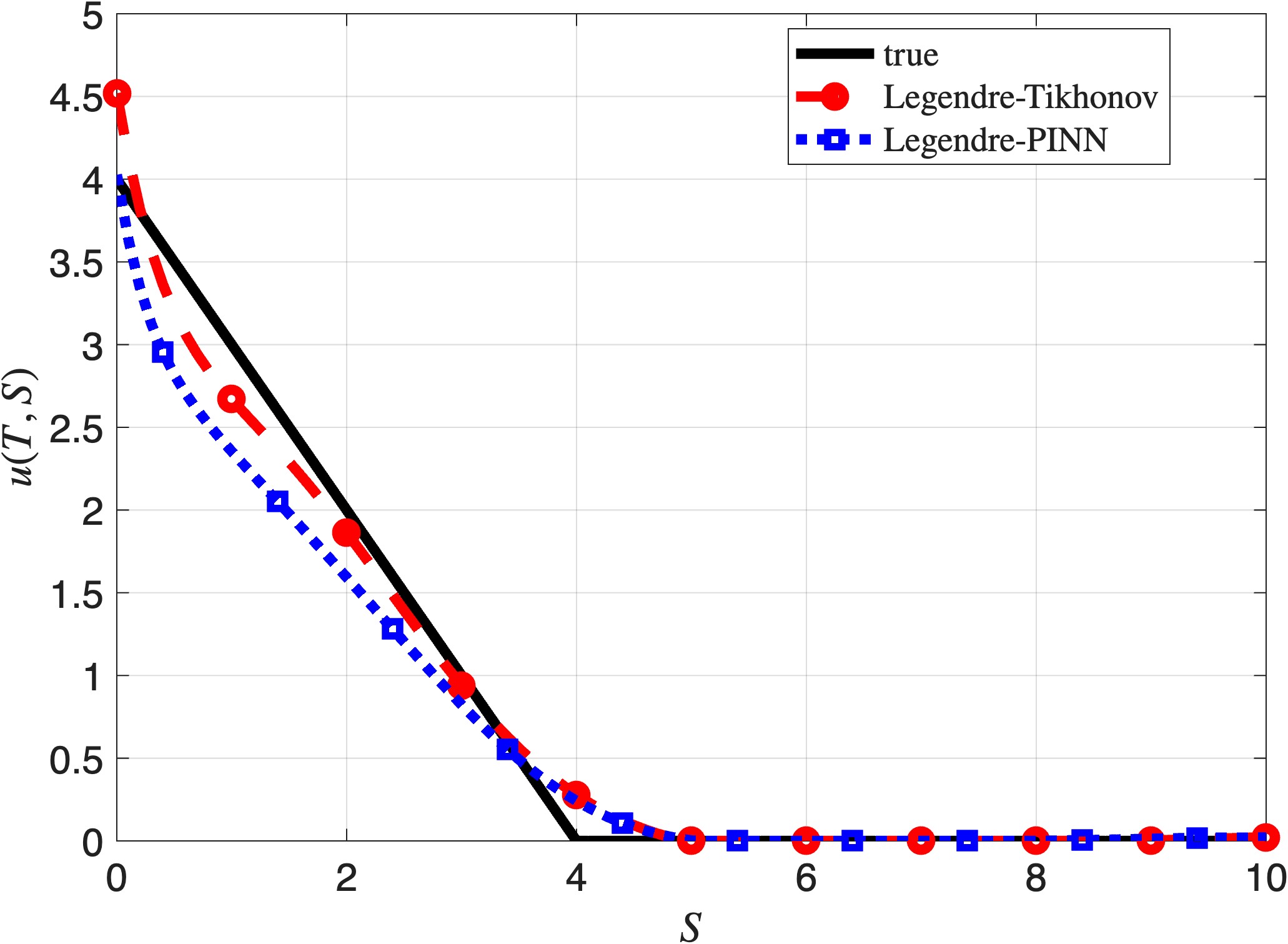}
    }
\caption{
Numerical results for Test 3 with final time $T=3$ and noisy initial data
generated according to \eqref{noise}. For $\delta=10\%$, the relative
$L^2$ errors are $7.83\%$ for the Tikhonov method and $4.36\%$ for the
reduced PINN solver. For $\delta=20\%$, the corresponding errors are
$9.37\%$ and $18.87\%$, respectively.
}
    \label{fig:test4}
\end{figure}

The results show that the Legendre--Tikhonov method and the reduced PINN solver
provide accurate reconstructions of the future-time profile. For the $10\%$ noise case, the relative $L^2$ errors
are $7.83\%$ for the Tikhonov method and $4.36\%$ for the reduced PINN
solver. For the $20\%$ noise case, the corresponding errors are $9.37\%$
and $18.87\%$, respectively. Thus, in this test, the reduced PINN solver
gives the smaller error for the $10\%$ noise case, while the
Legendre--Tikhonov method is more accurate for the $20\%$ noise case.
Both reconstructions capture the decreasing structure of the put payoff
and the location of the kink near the strike price. Although small
oscillations appear near the far-field region, the overall reconstruction
quality remains stable despite the larger final time $T=3$ and the
ill-posedness of the problem.

\FloatBarrier

\begin{table}[H]
\centering
\caption{Summary of relative $L^2$ reconstruction errors for the numerical tests.}
\label{tab:numerical-errors}
\small\begin{tabular}{c p{2.2in} c c c c}
\hline
Test & Terminal profile & $T$ & Noise level & Tikhonov error & Reduced PINN error \\
\hline
1 & Smooth compactly supported profile & $1$ & $10\%$ & $14.56\%$ & $15.40\%$ \\
1 & Smooth compactly supported profile & $1$ & $35\%$ & $23.27\%$ & $16.09\%$ \\
2 & European butterfly spread & $1.5$ & $5\%$ & $11.05\%$ & $20.53\%$ \\
2 & European butterfly spread & $1.5$ & $10\%$ & $10.22\%$ & $22.82\%$ \\
3 & European put payoff & $3$ & $10\%$ & $7.83\%$ & $4.36\%$ \\
3 & European put payoff & $3$ & $20\%$ & $9.37\%$ & $18.87\%$ \\
\hline
\end{tabular}
\end{table}

\FloatBarrier

\begin{Remark}
The numerical results show that the reduced PINN solver can provide
reasonable reconstructions after Legendre reduction, but it is used here
only as a secondary computational comparison. The main method of the paper
is the Legendre--Tikhonov reconstruction, for which the stability and
fixed-$N$ convergence theory was proved in Theorem~\ref{thm:tikhonov-fixed-N}.
The Reduced PINN errors in Table~\ref{tab:numerical-errors} should
therefore be interpreted as empirical evidence for the usefulness of a
neural-network parametrization of the reduced coefficient system, not as a
separate regularization theorem.
\end{Remark}

\subsection{Choice of the parameters $\alpha$ and $N$}\label{alphaN}

We now discuss the selection of the regularization parameter $\alpha$ and
the truncation number $N$ in the dimension-reduced Legendre--Tikhonov
method. The procedure is illustrated using Test 1 with final time $T=1$
and $10\%$ noise in the initial data. The curves in
Figure~\ref{fig:choice-alpha-N} are computed using only the noisy initial
profile $u^0_\delta$ and the reduced coefficient system. The exact future
profile $u_{\rm true}(T,S)$ is used only to report the reconstruction
error in the synthetic experiments, and is not used in the selection of
either $\alpha$ or $N$.

For a fixed truncation number $N$, the parameter $\alpha$ is chosen with
the guidance of the L-curve criterion. For each candidate value of
$\alpha$, we solve the dimension-reduced Legendre--Tikhonov problem and
compute the residual quantity
\[
    R(\alpha)
    =
    \left(
    \int_0^T
    \left|
        \mathbf u_\alpha'(t)-C(t)\mathbf u_\alpha(t)
    \right|^2\,dt
    +
    \left|
        \mathbf u_\alpha(0)-\mathbf u^0_\delta
    \right|^2
    \right)^{1/2}
\]
and the regularization norm
\[
    Q(\alpha)
    =
    \|\mathbf u_\alpha\|_{H^2(0,T;\mathbb R^{N+1})}.
\]
The L-curve is obtained by plotting $R(\alpha)$ against $Q(\alpha)$ for
the candidate values of $\alpha$. This curve shows the trade-off between
fitting the reduced system and the projected noisy initial data, and
controlling the regularity of the coefficient vector. In
Figure~\ref{fig:lcurve-case1-noise10}, the L-curve provides a stable
range for choosing $\alpha$. In the reported reconstruction for this
example, we choose $\alpha=3.2\times 10^{-5}$, which gives sufficient
regularization to suppress oscillations from the noisy initial data.

The truncation number $N$ is selected by examining the residual quantity
for different values of $N$. With the regularization parameter fixed, we
solve the dimension-reduced Legendre--Tikhonov problem for each candidate
$N$ and compute
\[
    R_N
    =
    \left(
    \int_0^T
    \left|
        \mathbf u_N'(t)-C_N(t)\mathbf u_N(t)
    \right|^2\,dt
    +
    \left|
        \mathbf u_N(0)-\mathbf u_{\delta,N}^0
    \right|^2
    \right)^{1/2}.
\]
Here $C_N(t)$ is the coefficient matrix obtained by retaining the first
$N+1$ Legendre modes, and $\mathbf u_{\delta,N}^0$ is the projection of
the noisy initial data onto the same reduced space. We choose $N$ near
the minimum of the curve $N\mapsto R_N$. In
Figure~\ref{fig:residual-vs-N-case1-noise10}, the residual reaches its
smallest value near $N=15$, and this value is used for the corresponding
reconstruction.

\begin{figure}[ht]
    \centering

    \subfloat[L-curve for choosing $\alpha$.]
    {
        \includegraphics[width=0.44\textwidth]{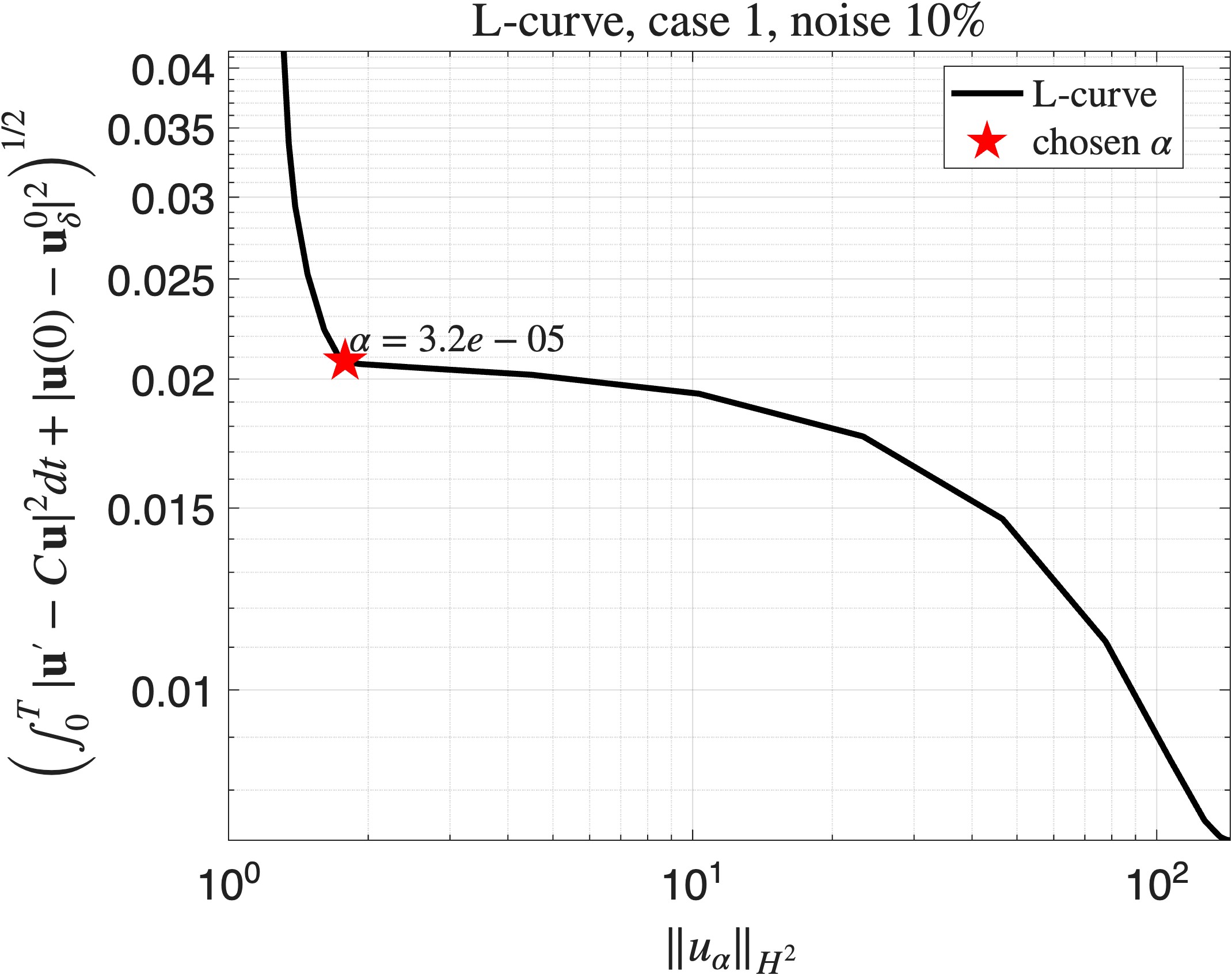}
        \label{fig:lcurve-case1-noise10}
    }
    \hfill
    \subfloat[Residual quantity versus $N$.]
    {
        \includegraphics[width=0.44\textwidth]{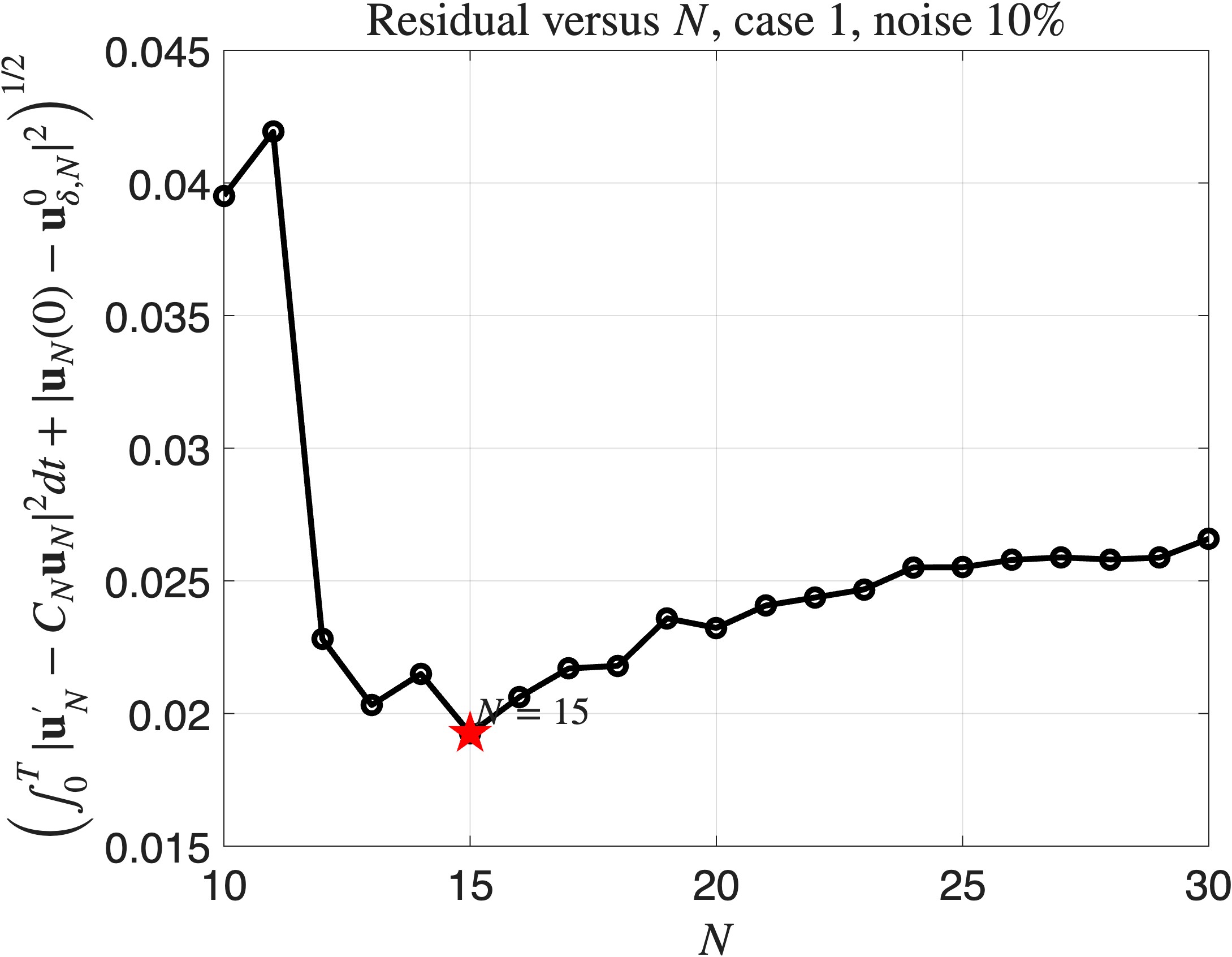}
        \label{fig:residual-vs-N-case1-noise10}
    }

    \caption{
    Parameter selection for Test 1 with $10\%$ noise. The L-curve in
    Figure~\ref{fig:lcurve-case1-noise10} is used to guide the choice of
    the regularization parameter $\alpha$. The residual curve in
    Figure~\ref{fig:residual-vs-N-case1-noise10} is used to choose the
    truncation number $N$. In this example, we choose
    $\alpha=3.2\times 10^{-5}$ and $N=15$.
    }
    \label{fig:choice-alpha-N}
\end{figure}

\subsection{Effect of an incorrect choice of $N$}

We next illustrate the role of the truncation number $N$ in the
Legendre-reduction method. As discussed in Section~2, the forward-time
Black--Scholes problem is unstable because high-frequency components of
the initial data may be strongly amplified as time evolves. The Legendre
projection helps control this instability by acting as a spectral cutoff:
only the first $N+1$ modes are retained, while higher-order oscillatory
components are removed from the reduced coefficient system.

The choice of $N$ is therefore part of the regularization mechanism. If
$N$ is chosen too small, important low-frequency information may be lost
and the reconstruction may be overly smoothed. On the other hand, if $N$
is chosen too large, the truncated expansion begins to retain modes that
are dominated by noise. These high-order modes can reintroduce the
unstable oscillations described in Section~2 and degrade the future-time
reconstruction.

This effect is shown in Figure~\ref{fig:wrong-N-case1-noise10}. We repeat
Test~1 with $10\%$ noise in the initial data and intentionally choose two
inappropriate truncation numbers. The choice $N=6$ is too small and
under-resolves the future-time profile. The excessive choice $N=120$ does
not merely improve the resolution of the Legendre approximation; instead,
it retains noisy high-order components, producing visible oscillations
and an overestimated peak in the recovered future-time profile.

\begin{figure}[ht]
    \centering
    \subfloat[$N=6$]{
        \includegraphics[width= .44\textwidth, height=0.25\textheight]{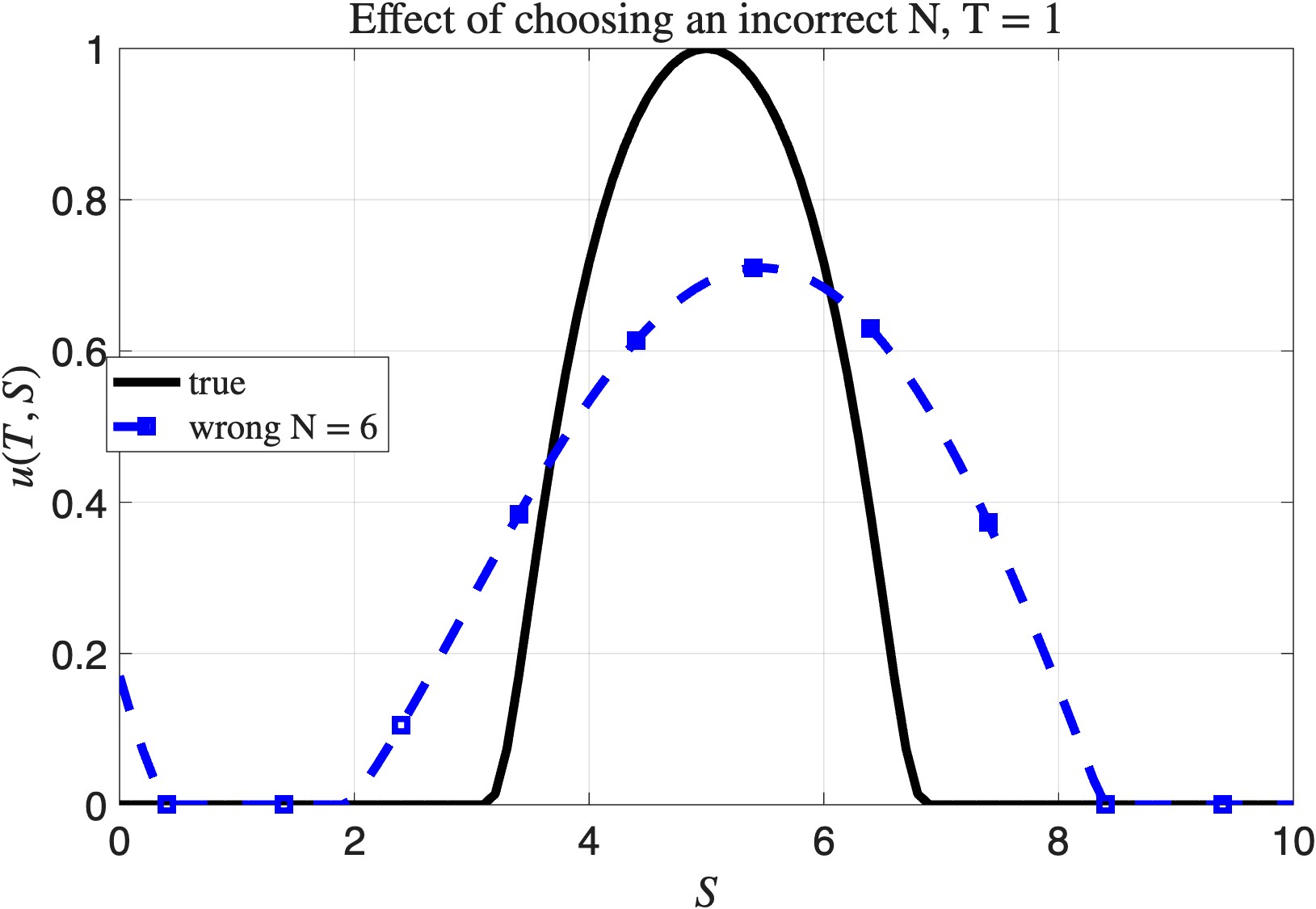}
    }
    \quad
    \subfloat[$N=120$]{
        \includegraphics[width= .44\textwidth, height=0.25\textheight]{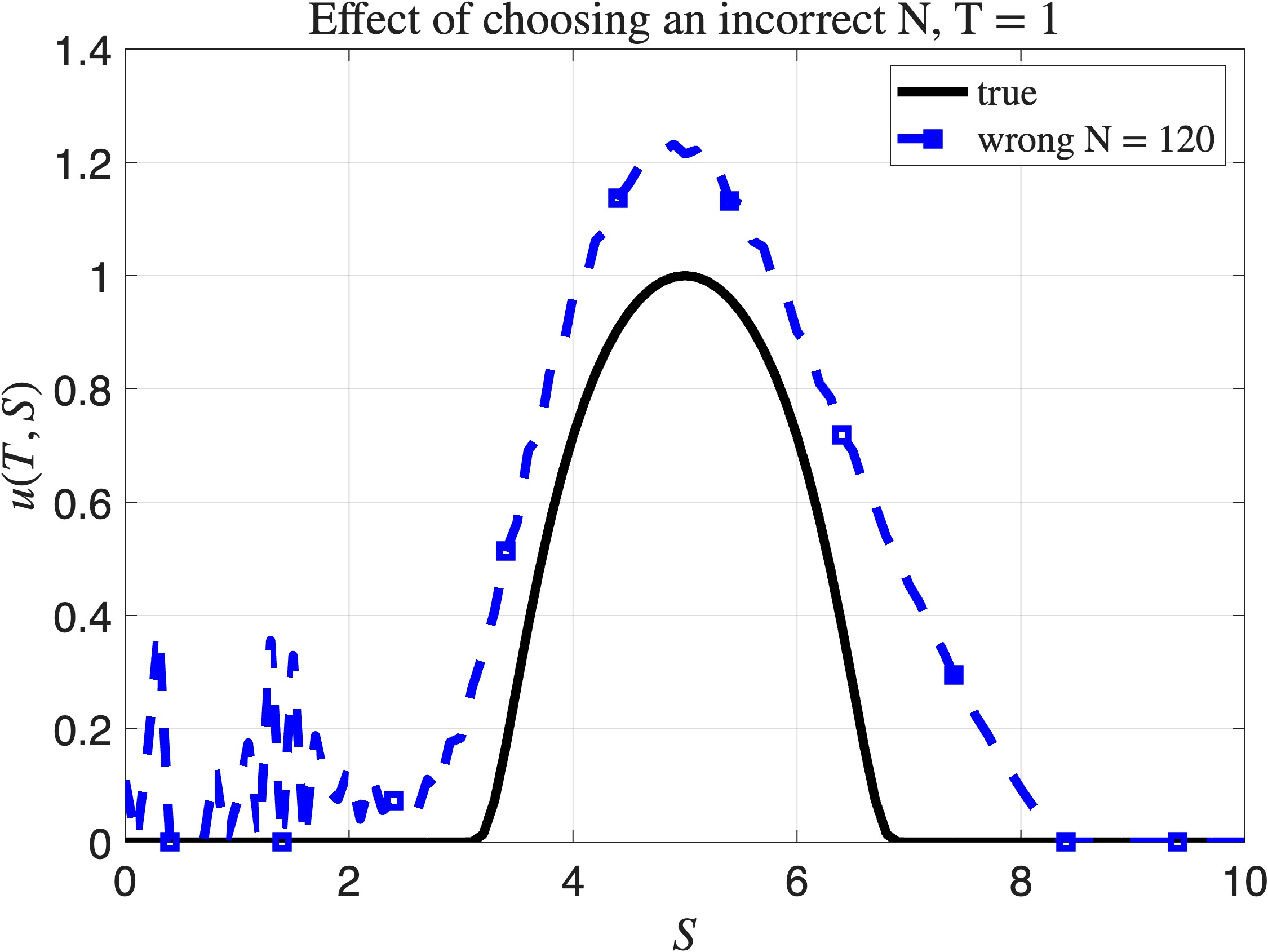}
    }
    \caption{
    Effect of incorrectly chosen truncation numbers for Test~1 with
    $10\%$ noise. The choice $N=6$ is too small and under-resolves the
    future-time profile, while the excessive choice $N=120$ retains
    unstable high-order Legendre modes and produces spurious oscillations.
    This illustrates the importance of choosing a balanced truncation
    number.
    }
    \label{fig:wrong-N-case1-noise10}
\end{figure}

This example confirms that the truncation number $N$ is not only a
discretization parameter but also a regularization parameter. A suitable
choice of $N$ balances approximation accuracy and stability: it preserves
the dominant low-order structure of the option-price profile while
discarding high-order modes that mainly represent noise. This observation
supports the central idea of the proposed dimension-reduced
Legendre--Tikhonov method, namely that the spectral cutoff induced by the
Legendre reduction helps stabilize the ill-posed forward-time
reconstruction.

\subsection{Comparison with the conventional quasi-reversibility method}

We next compare the proposed Legendre-reduction approach with the
conventional least-squares formulation, also known as the
quasi-reversibility method. In the conventional approach, the
Black--Scholes equation is solved directly in the physical variables
$(t,S)$. More precisely, we approximate $u(t,S)$ on a space-time grid and
determine the discrete solution by minimizing a Tikhonov-type
least-squares functional consisting of the Black--Scholes residual, the
mismatch with the initial data, and an $H^2$ regularization term. In
continuous notation, this corresponds to minimizing
\begin{multline}
    J_\alpha^{\rm phys}(u)
    =
    \int_0^T\int_0^{S_{\rm max}}
    \left|
        u_t
        +\frac12\sigma^2(t,S)S^2u_{SS}
        +rS u_S
        -ru
    \right|^2\,dSdt  \\
    +
    \|u(0,\cdot)-u_0\|_{L^2(0,S_{\rm max})}^2
    +
    \alpha\|u\|_{H^2((0,T)\times(0,S_{\rm max}))}^2 .
\end{multline}
The resulting linear least-squares system is solved by the LSQR algorithm.
This physical-space quasi-reversibility method is used as a baseline for
comparison with the proposed method. In contrast, our approach first
projects the solution onto a finite-dimensional Legendre basis in the
price variable $S$ and then solves the resulting reduced system of
ordinary differential equations in time.

To isolate the effect of the final time on the reconstruction quality, we
use the same smooth terminal profile as in Test 1 and repeat the
experiment in the noise-free setting for three final times:
$T=0.1$, $T=0.2$, and $T=0.3$. Thus, in
Figure~\ref{fig:compare-qrm}, the true profile and all model parameters
are fixed, while only the value of $T$ is varied. Since no noise is added
to the initial data, the comparison reflects the numerical stability of
the two formulations rather than the effect of data perturbation. Figure~\ref{fig:compare-qrm} shows the reconstructions for
$T=0.1$, $T=0.2$, and $T=0.3$.

\begin{figure}[ht]
    \centering

    \subfloat[Noise-free reconstruction, $T=0.1$.]
    {
        \includegraphics[width=0.3\textwidth]{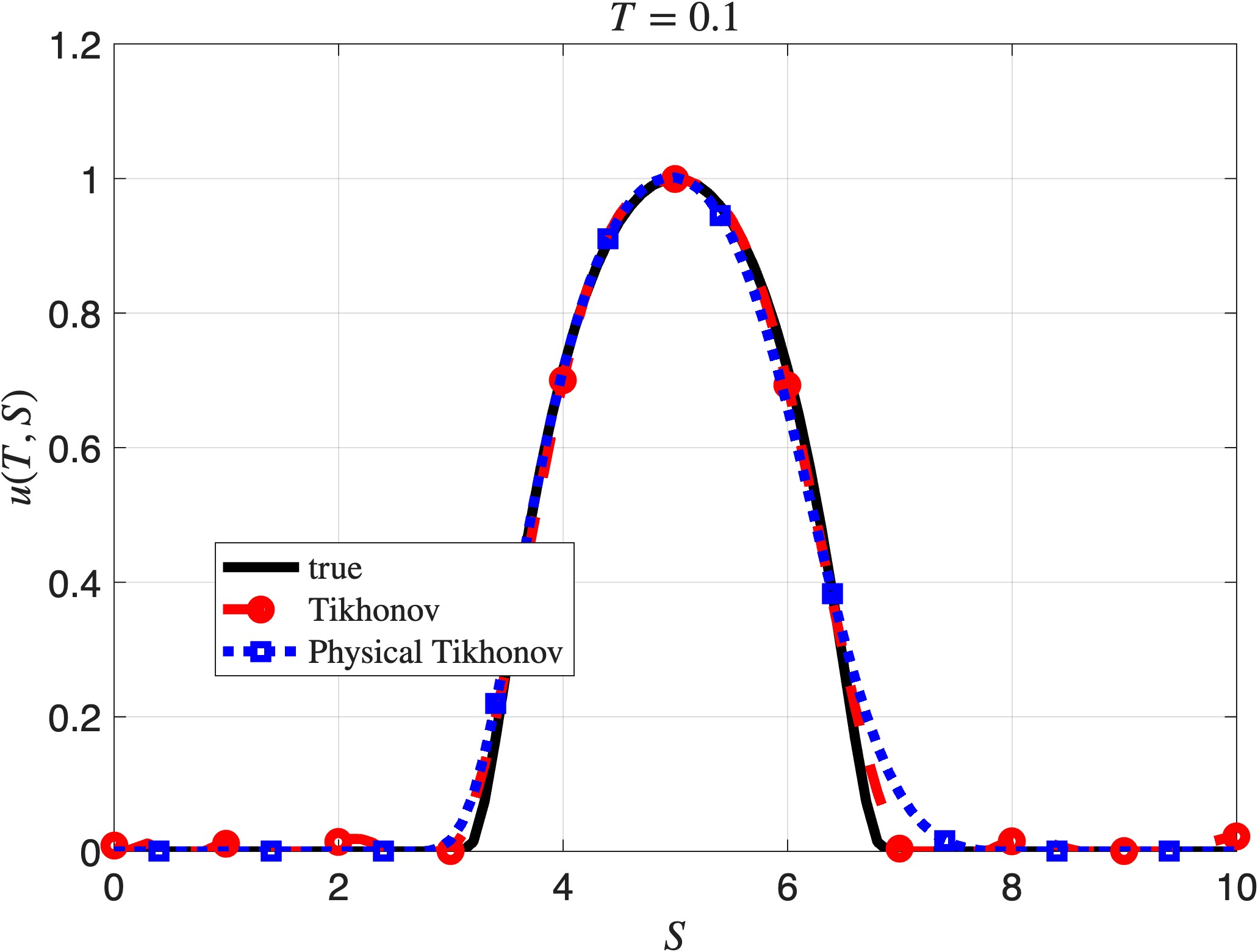}
        \label{fig:compare-physical-T01}
    }
    \hfill
    \subfloat[Noise-free reconstruction, $T=0.2$.]
    {
        \includegraphics[width=0.3\textwidth]{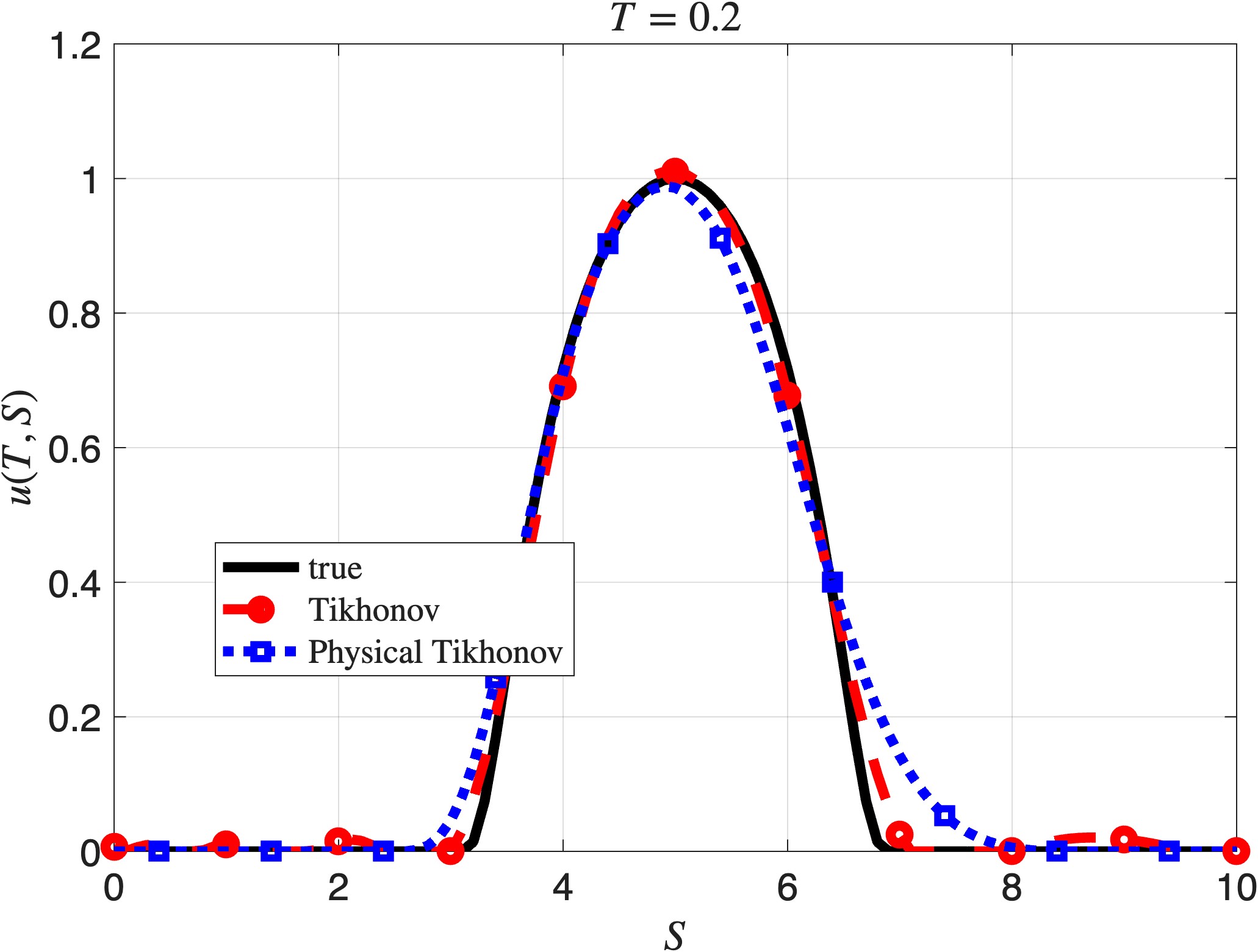}
        \label{fig:compare-physical-T02}
    }
    \hfill
    \subfloat[Noise-free reconstruction, $T=0.3$.]
    {
        \includegraphics[width=0.3\textwidth]{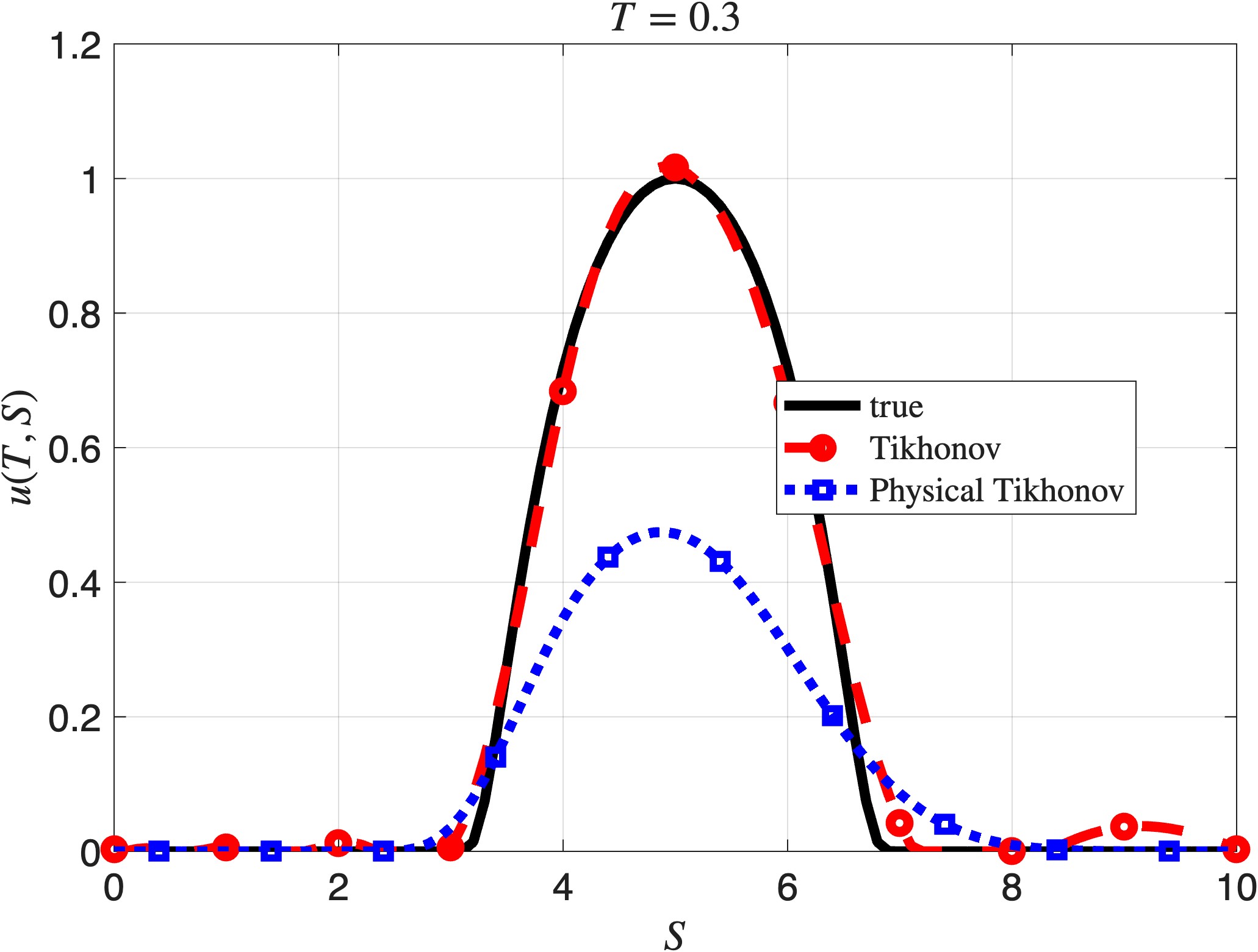}
        \label{fig:compare-physical-T03}
    }

    \caption{
    Comparison between the proposed Legendre-reduction method and the
    conventional least-squares quasi-reversibility method in the
    noise-free case. For $T=0.1$ and $T=0.2$, the Legendre--Tikhonov method and the reduced PINN solver provide
    accurate reconstructions. When $T=0.3$, the conventional method starts
    to lose accuracy and significantly underestimates the peak, whereas
    the Legendre-reduction method remains close to the exact future-time
    profile.
    }
    \label{fig:compare-qrm}
\end{figure}

Figure~\ref{fig:compare-qrm} shows the reconstructions for
$T=0.1$, $T=0.2$, and $T=0.3$. For the shorter final times
$T=0.1$ and $T=0.2$, the conventional quasi-reversibility method
still gives reasonable approximations of the exact future profile.
This suggests that the conventional method can be effective when the
time interval is sufficiently short and the instability has not yet
become dominant. However, as the final time increases to $T=0.3$, the
performance of the conventional method deteriorates: the reconstructed
profile becomes overly smoothed and significantly underestimates the
amplitude of the true solution. In contrast, the Legendre--Tikhonov
method and the reduced PINN solver remain closer to the exact profile
and continue to capture both the location and the height of the main
peak. This comparison indicates that the Legendre reduction provides a
stabilizing effect for longer time intervals, where the conventional
quasi-reversibility method becomes less reliable.

These results demonstrate the benefit of the Legendre reduction. Although
the original forward-time Black--Scholes problem is ill-posed, projecting
the solution onto a finite number of Legendre modes removes the most
unstable high-frequency components and converts the problem into a
finite-dimensional system in time. This reduced formulation provides a
more stable numerical framework than applying quasi-reversibility directly
to the original partial differential equation. The comparison suggests
that the Legendre-reduction step is especially important when the final
time is not very small.

\begin{Remark}[Use of noise-free data in the comparison]
The comparison in Figure~\ref{fig:compare-qrm} is carried out
with noise-free initial data. This choice is intentional. The purpose of
this comparison is to isolate the effect of the Legendre reduction from
the effect of measurement noise. The conventional physical-space
Tikhonov/quasi-reversibility formulation is applied directly to the
Black--Scholes equation in the variables $(t,S)$, and therefore it remains
highly sensitive to the unstable high-frequency components discussed in
Section~\ref{ill-posedness}. In our experiments, when noisy initial data
are used, the physical-space reconstruction becomes dominated by noise
amplification even for very small final times such as $T=0.1$. Such
results do not provide a useful visual comparison of the two formulations,
because the instability of the physical-space method overwhelms the
underlying reconstructed profile.

For this reason, the noise-free setting is used in this comparison as a
controlled diagnostic test. It allows us to examine whether the two
methods can recover the future-time profile when the only source of
difficulty is the forward-time instability of the equation itself, rather
than additional measurement noise. The noisy-data experiments in the
preceding subsections are used to demonstrate the robustness of the
proposed Legendre--Tikhonov reconstruction under perturbed initial data.
Together, these tests show that the Legendre reduction improves stability
both by removing high-order oscillatory modes and by providing a more
favorable reduced system for subsequent regularization.
\end{Remark}

\section{Concluding remarks}\label{sec6}

In this paper, we studied a forward-time formulation of the Black--Scholes
equation with state-dependent volatility. Unlike the classical
terminal-value pricing problem, the present problem prescribes the current
option-price profile and seeks to recover the option price profile at maturity $T$. This formulation is ill-posed because it evolves the parabolic
operator in the unstable direction, and high-frequency perturbations in
the initial data may be strongly amplified.

To address this difficulty, we introduced a price-dimensional reduction
based on shifted Legendre polynomials. By projecting the solution in the
asset-price variable, the original Black--Scholes equation is transformed
into a finite-dimensional system of ordinary differential equations in
time for the Legendre coefficient vector. This reduction acts as a
spectral cutoff and removes the most unstable high-frequency modes. It
also relaxes the difficulty caused by the degeneracy of the coefficient
$\frac12\sigma^2(t,S)S^2$ at $S=0$, since the principal differential
operator in the reduced system is the first derivative with respect to
time.

The main reconstruction method in this paper is the dimension-reduced
Legendre--Tikhonov method. The method uses a Tikhonov functional involving
the reduced differential residual, the projected initial data, and an
$H^2$ penalty in time. We proved existence, uniqueness, data stability,
and convergence for each fixed truncation level $N$. This provides a
rigorous regularization justification for the proposed reduced
reconstruction framework.

We also tested a reduced PINN solver as a secondary numerical approach.
The PINN is applied only to the reduced Legendre coefficient system, so it
approximates $N+1$ scalar functions of the time variable rather than the
full two-variable function $u(t,S)$. The PINN results suggest that
neural-network parametrizations can be useful after Legendre reduction,
but the rigorous stability and convergence analysis in this paper is
carried out for the Legendre--Tikhonov method.

The numerical experiments demonstrate that the Legendre--Tikhonov method
can recover the main qualitative features of the terminal option--price profile from noisy initial data. The tests include a smooth compactly
supported profile, a European butterfly spread, and a European put payoff.
These examples cover smooth and nonsmooth terminal profiles, as well as
compactly supported and non-compact payoff structures. The error summary
in Table~\ref{tab:numerical-errors} shows that the reduced PINN solver is
competitive in some cases, but it should be viewed as a supplementary
computational benchmark rather than as the main regularization method.

We also compared the proposed Legendre-reduction approach with the
conventional physical-space quasi-reversibility method. In the noise-free
comparison, the conventional method performs well for very small final
times, but starts to lose accuracy as the final time increases. In
contrast, the Legendre-reduction method remains stable for the tested
examples. This comparison supports the main idea of the paper: reducing
the price dimension before applying regularization is beneficial for the
forward-time Black--Scholes problem.

\section*{Declarations}

\medskip
\noindent\textbf{Declaration of competing interest.}
The authors declare that they have no known competing financial interests or personal relationships that could have appeared to influence the work reported in this paper.

\medskip
\noindent\textbf{Data availability.}
The data that support the findings of this study are available from the corresponding author upon reasonable request.


\end{document}